\newcommand{\partitle}[1]{\smallskip\noindent\textbf{#1}.}
\DeclareRobustCommand{\BG}[1]{{\todo[color=red!40,inline]{\textbf{Boris says:}{#1}}}}
\newtheorem{theo}{Theorem}
\newtheorem{defi}{Definition}
\newtheorem{lem}{Lemma}
\newtheorem{exam}{Example}
\newenvironment{theoremproof}[2][\bf Theorem]{\begin{trivlist}
\item[\hskip \labelsep {#1}\hskip \labelsep {#2}] \em}{\end{trivlist}}
\definecolor{black}{rgb}{0,0,0}
\definecolor{grey}{rgb}{0.8,0.8,0.8}
\definecolor{red}{rgb}{1,0,0}
\definecolor{green}{rgb}{0,1,0}
\definecolor{darkgreen}{rgb}{0,0.5,0}
\definecolor{darkpurple}{rgb}{0.5,0,0.5}
\definecolor{darkdarkpurple}{rgb}{0.3,0,0.3}
\definecolor{blue}{rgb}{0,0,1}
\definecolor{shadegreen}{rgb}{0.95,1,0.95}
\definecolor{shadeblue}{rgb}{0.95,0.95,1}
\definecolor{shadered}{rgb}{1,0.85,0.85}
\definecolor{shadegrey}{rgb}{0.85,0.85,0.85}
\definecolor{oddRowGrey}{rgb}{0.80,0.80,0.80}
\definecolor{evenRowGrey}{rgb}{0.85,0.85,0.85}
\newcommand{\defas}{:=}
\newcommand{\mathtext}[1]{\thickspace\text{#1}\thickspace}
\newcommand{\mathtab}{\thickspace\thickspace\thickspace}
\newcommand{\projection}{\Pi}
\newcommand{\selection}{\sigma}
\newcommand{\union}{\cup}
\newcommand{\join}{\bowtie}
\newcommand{\asingleton}[2]{\{#1 \to {#2}\}}
\newcommand{\schema}[1]{\textsc{Sch}(#1)}
\newcommand{\RAPlus}{{\cal RA}^+}
\newcommand{\upAttr}{{\cal U}}
\newcommand{\xid}{T}
\newcommand{\xidDomain}{\mathbb{T}}
\newcommand{\version}{\nu}
\newcommand{\versionDomain}{\mathbb{V}}
\newcommand{\tid}{id}
\newcommand{\tidOf}{\tid}
\newcommand{\tidDomain}{\mathbb{I}}
\newcommand{\start}[1]{Start({#1})}
\newcommand{\finish}[1]{End(#1)}
\newcommand{\history}{H}
\newcommand{\up}{u}
\newcommand{\relUpdate}[5]{{\cal U}[#1,#2,#3,#4](#5)}
\newcommand{\relInsert}[4]{{\cal I}[#1,#2,#3](#4)}
\newcommand{\relDelete}[4]{{\cal D}[#1,#2,#3](#4)}
\newcommand{\commitOp}[3]{{\cal C}[#1,#2](#3)}
\newcommand{\doCommit}[3]{\textsc{com}[#1,#2](k)}
\newcommand{\db}{D}
\newcommand{\upMarker}[4]{#1_{#2,#3}^{#4}}
\newcommand{\cMarker}[3]{C_{#1,#2}^{#3}}
\newcommand{\upMarkers}{\mathbb{A}}
\newcommand{\upMark}{{\cal A}}
\newcommand{\rel}{R}
\newcommand{\relV}[3]{{#1}[#2,#3]}
\newcommand{\relCV}[2]{{#1}[#2]}
\newcommand{\relVE}[3]{{#1}_{ext}[#2,#3]}
\newcommand{\validIn}{\textsc{validIn}}
\newcommand{\validEx}{\textsc{validEx}}
\newcommand{\dbV}[2]{{\db}[#1,#2]}
\newcommand{\dbCV}[1]{{\db}[#1]}
\newcommand{\genAnnotOp}{\alpha}
\newcommand{\annotOp}[4]{\genAnnotOp_{{#1},{#2},{#3}}}
\newcommand{\vFilt}[1]{{\gamma}_{#1}}
\newcommand{\vMerge}{{\mu}}
\newcommand{\versionOf}{versionOf}
\newcommand{\idOf}{idOf}
\newcommand{\isMax}{isMax}
\newcommand{\isStrictMax}{isStrictMax}
\newcommand{\validAt}{\textsc{validAt}}
\newcommand{\hasUp}{\textsc{updated}}
\newcommand{\ppSR}{\mathbb{N}[X]}
\newcommand{\ppSRV}{\mathbb{N}[X]^{\version}}
\newcommand{\ract}{\mathbb{R}}
\newcommand{\semK}{{\cal K}}
\newcommand{\mvK}{{\cal K}^{\version}}
\newcommand{\mvOf}[1]{{#1}^{\version}}
\newcommand{\mvDom}{K^{\version}}
\newcommand{\NXv}{\mathbb{N}[X]^{\version}}
\newcommand{\congr}[1]{[{#1}]_\sim}
\newcommand{\numInSum}[1]{n(#1)}
\newcommand{\nthOfK}[2]{#1[#2]}
\newcommand{\liftH}[1]{{#1}^{\version}}
\newcommand{\thead}[1]{{\cellcolor{black}{\textcolor{white}{\textbf{#1}}}}}
     \newbox\sf@box
\def\sf@one{#1}%
        \def\sf@two{#2}%
          \def\sf@two{\@empty}
\newcommand{\eat}[1]{}
\newcommand{\hasCreated}{\textsc{hasCreated}}
\newcommand{\impred}{\textsc{immPred}}
\newcommand{\prooftitle}[1]{\smallskip\noindent\underline{#1}:}
\newcommand{\mathbigtab}{\mathtab\mathtab\mathtab}
\begin{document}

\lstdefinestyle{psql}
{
tabsize=2,
basicstyle=\small\upshape\ttfamily,
language=SQL,
morekeywords={PROVENANCE,BASERELATION,INFLUENCE,COPY,ON,TRANSPROV,TRANSSQL,TRANSXML,CONTRIBUTION,COMPLETE,TRANSITIVE,NONTRANSITIVE,EXPLAIN,SQLTEXT,GRAPH,IS,ANNOT,THIS,XSLT,MAPPROV,cxpath,OF,TRANSACTION,SERIALIZABLE,COMMITTED,INSERT,INTO,WITH,SCN,UPDATED},
extendedchars=false,
keywordstyle=\color{blue},
mathescape=true,
escapechar=@,
sensitive=true
}

\lstdefinestyle{rsl}
{
tabsize=3,
basicstyle=\small\upshape\ttfamily,
language=C,
morekeywords={RULE,LET,CONDITION,RETURN,AND,FOR,INTO,REWRITE,MATCH,WHERE},
extendedchars=false,
keywordstyle=\color{blue},
mathescape=true,
escapechar=@,
sensitive=true
}

\lstdefinestyle{pseudocode}
{
  tabsize=3,
  basicstyle=\small,
  language=c,
  morekeywords={if,else,foreach,case,return,in,or},
  extendedchars=true,
  mathescape=true,
  literate={:=}{{$\gets$}}1 {<=}{{$\leq$}}1 {!=}{{$\neq$}}1 {append}{{$\listconcat$}}1 {calP}{{$\cal P$}}{2},
  keywordstyle=\color{blue},
  escapechar=&,
  numbers=left,
  numberstyle={\color{green}\small\bf}, 
  stepnumber=1, 
  numbersep=5pt,
}

\lstdefinestyle{xmlstyle}
{
  tabsize=3,
  basicstyle=\small,
  language=xml,
  extendedchars=true,
  mathescape=true,
  escapechar=£,
  tagstyle={\color{blue}},
  usekeywordsintag=true,
  morekeywords={alias,name,id},
  keywordstyle={\color{red}}
}


\lstset{style=psql}

\title{Reenactment for Read-Committed Snapshot Isolation}
\subtitle{(Long Version)}

\author{
\alignauthor
 \hspace{-1cm} Bahareh Sadat Arab{\large $\,^{\bigstar}$}, Dieter Gawlick{\large $\,^{\blacklozenge}$}, Vasudha Krishnaswamy{\large $\,^{\blacklozenge}$},\\ Venkatesh Radhakrishnan{\large $\,^{\square}$}, Boris Glavic{\large $\,^{\bigstar}$}\\
 \affaddr{ \hspace{-2cm} {\large $\,^{\bigstar}$}Illinois Institute of Technology \hspace{4cm}  {\large $\,^{\blacklozenge}$}Oracle \hspace{3cm} {\large $\,^{\square}$}LinkedIn}\\
 \affaddr{\normalsize \hspace{-3mm} \{barab@hawk.,bglavic@\}iit.edu, \{dieter.gawlick,vasudha.krishnaswamy\}@oracle.com, vradhakrishnan@linkedin.com}\\
}
\maketitle

\begin{abstract}
Provenance for transactional updates is critical for many applications such as auditing and debugging of transactions. Recently, we have introduced \emph{MV-semirings}, an extension of the semiring provenance model that supports updates and transactions. Furthermore, we have proposed \emph{reenactment}, a declarative form of replay with provenance capture, as an efficient and non-invasive method for computing this type of provenance. However, this approach is limited to the snapshot isolation (SI) concurrency control protocol while many real world applications apply the read committed version of snapshot isolation (RC-SI) to improve 
performance at the cost of consistency. We present non-trivial extensions of the model and reenactment approach to be able to compute provenance of RC-SI transactions efficiently. In addition, we develop techniques for applying reenactment across multiple RC-SI transactions. Our experiments demonstrate that our implementation in the GProM system 
supports efficient re-construction and querying of provenance.

\end{abstract}





\section{Introduction}
\label{sec:introduction}


Tracking the derivation of data through a history of transactional updates, i.e., tracking the provenance of such operations, is critical for many applications including auditing, data integration, probabilistic databases, and post-mortem debugging of transactions. For example, by exposing data dependencies, provenance provides proof of how data was derived, by which operations, and at what time. 
Until recently, no solution did exist for tracking the provenance of updates run as part of concurrent transactions. 
%

\partitle{MV-semirings and Reenactment}
In previous work~\cite{AG16}, we have introduced \textbf{MV\--sem\-i\-rings} (multi-version sem\-i\-rings). MV-semirings extend the sem\-i\-ring provenance framework~\cite{KG12} with support for transactional updates. We have introduced a low-overhead implementation of this model in our GProM system~\cite{AG14} using a novel declarative replay technique (\textbf{re\-en\-act\-ment}~\cite{AG16}). This finally makes this type of provenance available to applications using the \textit{snapshot isolation} (\textbf{SI}) concurrency control protocol. Figure~\ref{fig:Reenactment} illustrates how reenactment is applied to retroactively compute provenance for updates and transactions based on replay with provenance capture. Consider the database states induced by a history of concurrently executed transactions.  
With our approach, a user can request the provenance of any transaction executed in the past, e.g., Transaction $T_2$ in the example. Using reenactment, a temporal query is generated that simulates the transaction's operations within the context of the transactional history and this query is instrumented for provenance capture. This so-called reenactment query is guaranteed to return the same results (updated versions of the relations modified by the transaction) as the original transaction. In the result of the reenactment query, each tuple is annotated with its complete derivation history: 1) from which previous tuple versions was it derived and 2) which updates of the transaction affected it. Importantly, reenactment only requires an audit log (a log of SQL commands executed in the past) and time travel (query access to the transaction time history of tables) to function. That is, no modifications to the underlying database system or transactional workload are required. Many DBMS including Oracle~\cite{WEB:ORACLE-FBA}, 
DB2, and MSSQL~\cite{LL09} support a query-able audit log and time travel. If a system does not natively  support this functionality we can implement it using extensibility mechanisms (e.g., triggers). 
Snapshot isolation is a widely applied protocol (e.g., supported by Oracle, PostgreSQL, MSSQL, and many others). However, the practical applicability of reenactment is limited by the fact that many real world applications use statement-level snapshots instead of transaction-level snapshots. Using statement-level snapshots improves performance and timeliness of data even though this comes at the cost of reduced consistency.
In this work, we present the non-trivial extensions that are necessary to support 
statement-level snapshot isolation (isolation level \lstinline!READ COMMITTED! in the aforementioned systems).

%
%

\begin{figure*}[t]
  \centering
  \begin{minipage}{0.45\linewidth}
    \begin{minipage}{1\linewidth}
        \centering
  \includegraphics[width=6cm]{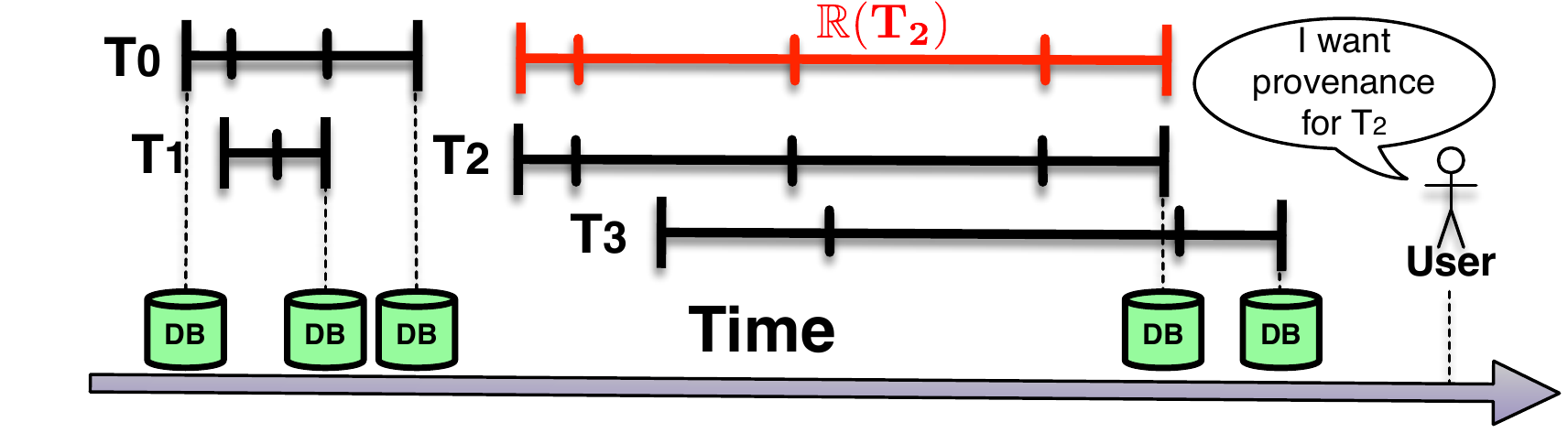}
   \caption{Reenactment}
   \label{fig:Reenactment}
    \end{minipage}
    \begin{minipage}{1\linewidth}
\resizebox{0.9\textwidth}{!}{
  \begin{minipage}{1.1\textwidth}
   \centering {\large \bf Employee}\\[2mm]
    \begin{tabular}{l|c|c|c|l}
     & \thead{ID} & \thead{Name} & \thead{Position} & \\ \cline{2-4}
    $\cMarker{T_0}{6}{1}(\upMarker{I}{T_0}{2}{1}(x_1))$ & 101 & Mark Smith & Software Engineer & $e_1$ \\
    $\cMarker{T_0}{6}{2}(\upMarker{I}{T_0}{3}{2}(x_2))$ & 102 & Susan Sommers & Software Architect & $e_2$ \\     
    $\cMarker{T_0}{6}{3}(\upMarker{I}{T_0}{4}{3}(x_3))$ &  103 & David Spears & Test Assurance & $e_3$ \\ \cline{2-4}
    \end{tabular}\\[2mm]

    \centering {\large \bf Bonus}\\[2mm]
    \begin{tabular}{l|c|c|c|l}
     & \thead{ID} & \thead{EmpID} &  \thead{Amount}&  \\ \cline{2-4}
	$\cMarker{T_1}{10}{4}(\upMarker{I}{T_1}{8}{4}(x_4))$  & 1 & 101 & 1000 &  $b_1$ \\
	$\cMarker{T_2}{14}{5}(\upMarker{I}{T_2}{12}{5}(x_5))$  & 2 & 102 & 2000 &  $b_2$ \\
	$\cMarker{T_4}{18}{6}(\upMarker{I}{T_4}{16}{6}(x_6))$  & 3 & 103 & 500 &  $b_3$  \\ \cline{2-4} 
     \end{tabular}
     
  \end{minipage}
}\\[2mm]
  \caption{Running example database instance}
  \label{fig:running-example-instance}
\end{minipage}

\end{minipage}
\begin{minipage}{0.54\linewidth}
  \begin{minipage}{1\linewidth}
  \centering
\resizebox{1\columnwidth}{!}{
  \begin{minipage}{1.3\columnwidth}
   \centering {\large \bf Employee}\\[2mm]
    \begin{tabular}{l|c|c|c|l}
     & \thead{ID} & \thead{Name} & \thead{Position} & \\ \cline{2-4}
   $\cMarker{T_{7}}{26}{1}(\upMarker{U}{T_{7}}{21}{1}(\cMarker{T_0}{6}{1}(\upMarker{I}{T_0}{2}{1}(x_1)))))$ & 101 & Mark Smith & \cellcolor{shadered} Software Architecture & ${e_1}'$ \\
      \cline{2-4}
    \end{tabular}\\[2mm]

    \centering {\large \bf Bonus}\\[2mm]
    \begin{tabular}{l|c|c|c|l}
     & \thead{ID} & \thead{EmpID} &  \thead{Amount}&  \\ \cline{2-4}
	$\cMarker{T_{7}}{26}{4}(\upMarker{U}{T_{7}}{22}{4}(\cMarker{T_1}{10}{4}(\upMarker{I}{T_1}{8}{4}(x_4))))$ & 1 & 101 & \cellcolor{shadered} 2000 &  ${b_1}'$ \\
	$\cMarker{T_8}{24}{7}(\upMarker{I}{T_8}{23}{7}(\cMarker{T_0}{6}{1}(\upMarker{I}{T_0}{2}{1}(x_1))))$  &\cellcolor{shadered}4 & \cellcolor{shadered}101 & \cellcolor{shadered}500 &  $b_4$  \\\cline{2-4} 
     \end{tabular}
     
  \end{minipage}
}\\[2mm]
\caption{New and modified tuples after execution of the example history (version 26). Modified attribute values and new tuples are shown with shaded background.}
  \label{fig:updated-example-instance}
\end{minipage}
\begin{minipage}{1\linewidth}
    \centering
  \resizebox{1\columnwidth}{!}{
  \begin{minipage}{1.65\columnwidth}
 \begin{tabular}{l|ccc||ccc||c||c|}
& \multicolumn{3}{c||}{\bf Bonus} & \multicolumn{3}{c||}{\bf
  Bonus Provenance} & {\bf $u_1$} &
  {\bf $u_2$} \\[1mm]
& \thead{ID} & \thead{EmpID} & \thead{Amount} 
& \thead{P(B,ID)} & \thead{P(B,EmpID)} & \thead{P(B,Amount)}
& \thead{$\upAttr_1$ }
& \thead{$\upAttr_2$ } 
\\[1mm]   
$\cMarker{T_{7}}{26}{4}(\upMarker{U}{T_{7}}{22}{4}(b_1))$ & 1 & 101 & 2000 & 1 & 101 & 1000 & F & T \\ \cline{2-9} 
\end{tabular}\\[2mm]
  \end{minipage}
}
  \caption{Relational encoding of provenance restricted to Transaction $\xid_{7}$. Only tuples modified by this transaction are included and the derivation history of tuples is limited to updates of this transaction. Variables are encoded as actual tuples.}
  \label{fig:rel-enc-example}
\end{minipage}
\end{minipage}
\end{figure*}

\partitle{Snapshot Isolation (SI)} Snapshot isolation~\cite{BB95} is a widely applied multi-versioning concurrency control protocol. Under SI each trans\-ac\-tion $\xid$ sees a private
snapshot of the database containing changes of transactions that have committed
before $\xid$ started and $\xid$'s own changes. SI disallows concurrent transactions to update the same data item. This is typically implemented by using write locks where transactions waiting for a lock have to abort if the transaction currently holding the lock commits.


\partitle{Read Committed Snapshot Isolation (RC-SI)} Under RC-SI each statement of a transaction sees changes of transactions that committed before the
statement was executed. In this
paper we assume the RC-SI semantic as implemented by Oracle, i.e., a statement waiting for a write-lock is restarted once the transaction holding the lock commits. This guarantees that each statement sees a consistent snapshot of the database. 
\begin{figure}[t]
  \centering


\resizebox{1\columnwidth}{!}{  
   \begin{tabular}{|l|l|c|}
     \hline
     \multicolumn{1}{c|}{\thead{T}} & \multicolumn{1}{c|}{\thead{SQL}}
     & \thead{Time} \\ \hline
     \rowcolor{shadeblue} $T_{7}$ & \lstinline! UPDATE Employee  SET Position='Software Architect'! &20\\
     \rowcolor{shadeblue} &\lstinline! WHERE ID=101; ! & \\  [1mm] 
     $T_{7}$ & \lstinline! UPDATE Bonus  SET Amount = Amount + 1000! &21\\
                                    &\lstinline! WHERE ID=101; ! & \\  [1mm]                                
     \rowcolor{shadeblue}$T_{8}$ & \lstinline! INSERT INTO Bonus (EmpID, Amount) ! & 22 \\ 
     \rowcolor{shadeblue}& \lstinline!  (SELECT ID, 500 FROM Employee ! & \\
     \rowcolor{shadeblue}& \lstinline!  WHERE Position='Software Engineer'); ! & \\[1mm]
                   
     $T_{8}$ & \lstinline! COMMIT;! &  23\\ [1mm]
    \rowcolor{shadeblue}  $T_{7}$ &\lstinline! SELECT Amount INTO amounts FROM Bonus  ! & 24 \\
     \rowcolor{shadeblue} 		 &\lstinline! WHERE ID=101; ! & \\  [1mm]                                
    $T_{7}$ &\lstinline! COMMIT;! & 25\\ \hline
   \end{tabular}
}
\caption{Example Transactional History}
\label{fig:Transactions Example}
\end{figure}
\begin{exam}
\label{ex:running-example}
Consider the example database shown in
Figure~\ref{fig:running-example-instance} storing information about employees and the bonuses they received. Ignore the annotations to the left of each tuple for now.
Two transactions have been executed concurrently (Figure~\ref{fig:Transactions Example}) using the \textit{RC-SI} protocol. 
In this example, all software engineers got a bonus of \$1000 while software architects received \$2000.  
Suppose administrator Bob executed transaction $\xid_{7}$ to update the position of Mark Smith to reflect his recent promotion to architect and update his bonus accordingly (increasing it by \$1000).
Concurrently, user Alice executed Transaction $T_8$ to implement the company's new policy of giving an additional bonus of \$500 to all \emph{software engineers}.
All new and updated tuples for relations \texttt{Employee} and \texttt{Bonus} after the execution of these transactions are shown in Figure~\ref{fig:updated-example-instance} (updated attributes are marked in red). Bob has executed a query at the end of his transaction ($\xid_{7}$) to double check the bonus amount for Mark expecting a single bonus of \$2000 instead of the actual result (a second bonus of \$500). The unexpected second bonus is produced by Transaction $\xid_8$, because this transaction did not see the uncommitted change of $\xid_7$ reflecting Mark's promotion. Thus, Mark was considered to still be a software engineer and received the corresponding \$500 bonus. This kind of error is hard to debug, because it only materializes if the execution of the two transactions is interleaved in a certain way and would not occur in any serializable schedule.
\end{exam}

By exposing data dependencies among tuple versions (e.g., the \$500 bonus for Mark is based on the previous version of Mark's tuple $e_1$ in the \texttt{Employee} table) and by recording which operations created a tuple version (e.g., the updated \$2000 bonus for Mark was produced by the second update of $\xid_7$), the MV-semiring provenance model greatly simplifies debugging of transactions. We now give an overview of our model and then present our extensions for RC-SI. 

\section{The MV-Semiring Model}
\label{sec:prov-model-vers}

Using MV-semirings (multi-version semirings), provenance is represented as annotations on tuples, i.e., each tuple is annotated with its derivation history (provenance). 

\partitle{$\semK$-relations}
We briefly review  the semiring provenance framework~\cite{GK07,KG12} on which MV-semirings are based on.
In this framework relations are annotated with elements from an annotation domain $K$. Depending on the domain $K$, the annotations can serve different purposes. For instance, 
 natural number  annotations ($\mathbb{N}$) represent the multiplicity of tuples under bag semantics while using polynomials over a set of variables (e.g., $x_1$, $x_2$, \ldots) representing tuple identifiers the annotations encodes provenance. 
Let $\semK = (K,+_{\semK},\times_{\semK},0_{\semK},1_{\semK})$ be a commutative semiring. A \textbf{$\semK$-relation} $R$ is a (total) function that maps tuples to elements from $\semK$ with the convention that tuples
mapped to $0_{\cal K}$, the $0$ element of the semiring, are not in the
relation. A structure $\semK$ is a commutative semiring if it fulfills the equational laws shown on the top of Figure~\ref{fig:congr-for-mvK}. 
As we will see in the following, the operators of the positive relational algebra ($\RAPlus$) over
$\semK$-relations are defined by combining input annotations using the $+_\semK$ and $\times_\semK$ operations where addition represents alternative use of inputs (e.g., union) and multiplication denotes conjunctive use of inputs (e.g., join).
%
%
 The
semiring $\mathbb{N}$, the set of natural numbers with standard arithmetics
corresponds to bag semantics. 
For example, if a tuple $t$ occurs twice in a relation $R$, then this tuple would be annotated with $2$ in the $\mathbb{N}$-relation corresponding to $R$.


\partitle{Provenance polynomials}
Provenance polynomials (semiring $\ppSR$), polynomials over a set of variables
$X$ which represent tuples in the database, model a very expressive type of provenance by encoding how a query result tuple was derived by combining input tuples.   Using $\ppSR$, every tuple in an instance
is annotated with a unique variable $x \in X$ and the results of queries are annotated with polynomials over these variables.
For example, if a tuple was derived by joining input tuples identified by $x_1$ and $x_2$, then it would be annotated with $x_1 \times x_2$.
%
 Since we are mainly concerned with provenance, we mostly limit the discussion to $\ppSR$ and its MV-semiring extension as explained below.


\begin{figure}[t]
  \centering
\textbf{Laws of commutative semirings}\\[-6mm]
  \begin{align*}
k + 0_\semK &= k
&k \times 1_\semK &= k \tag{neutral elements}
\end{align*}\\[-8mm]
\begin{align*}
k + k' &= k' + k
&k \times k' &= k' \times k \tag{commutativity}
\end{align*}\\[-7mm]
\begin{align*}
  \begin{split}
    k + (k' + k'') &= (k + k') + k''\\[-1mm]
k \times (k' \times k'') &= (k \times k') \times k''
  \end{split}  \tag{associtivity}
\end{align*}\\[-6mm]
\begin{align*}
k \times 0_\semK &= 0_\semK \tag{annihilation through $0$}\\[-1mm]
k \times (k' + k'') &= (k \times k') + (k \times k'') \tag{distributivity}
\end{align*}
\textbf{Evaluation of expressions with operands from $K$}\\[-5mm]
  \begin{align*}
k + k' &= k +_{\semK} k' 
&k \times k' &= k \times_{\semK} k' \tag{if $k \in K \wedge k' \in K$}
  \end{align*}
\textbf{Equivalences involving version annotations}\\[-5mm]
  \begin{align*}
\upMark(0_\semK) &= 0_\semK
&\upMark(k + k') &= \upMark(k) + \upMark(k')   
  \end{align*}\\[-2mm]
  \caption{Equivalence relations for $\mvK$}
  \label{fig:congr-for-mvK}
\end{figure}

\partitle{MV-semirings}
In~\cite{AG16} we have introduced MV-semirings which are a specific class of semirings that encode the derivation of tuples based on a history of transactional updates. For each semiring $\semK$, there exists a corresponding semiring $\mvK$, e.g., $\ppSRV$ is the MV-semiring corresponding to the  provenance polynomials semiring $\ppSR$. Since $\mathbb{N}$ encodes bag semantic relations, $\mvOf{\mathbb{N}}$ represents bag semantics with embedded history.  
 Figures~\ref{fig:running-example-instance} and~\ref{fig:updated-example-instance} show examples of $\ppSRV$ annotations on the left of tuples. 
In these  symbolic $\ppSRV$ expressions variables (e.g., $x_1$, $x_2$, \ldots) represent identifiers of freshly inserted tuples and uninterpreted function symbols called \textit{version annotations} encode which operations  (e.g., a relational update) were applied to the tuple. The nesting of version annotations records the sequence of operations that were applied to create a tuple version. For instance, consider the annotation of tuple $e_1$ in Figure~\ref{fig:running-example-instance}. This tuple was inserted at time $2$ by Transaction $\xid_0$ and was assigned an identifier $1$ ($\upMarker{I}{\xid_0}{2}{1}$). The tuple became visible to other transactions after $\xid_0$'s commit ($\upMarker{C}{\xid_0}{6}{1}$). Observe that these annotations encode what operations have been applied to tuples and from which other tuples they were derived.

\partitle{Version Annotations} 
A version annotation $\upMarker{X}{\xid}{\version}{\tid}(k)$ denotes that an operation of type $X$ (one of  update $U$, insert $I$, delete $D$, or commit $C$) that was executed at time $\version - 1$ by transaction $\xid$ did affected a previous version of a tuple with identifier $\tid$ and previous provenance $k$. 
Assuming domains of tuple identifiers $\tidDomain$, version identifiers $\versionDomain$, and transaction identifiers $\xidDomain$, we use $\upMarkers$ to denote the set of all possible version annotations. This set contains the following version annotations for each $\tid \in \tidDomain$, $\version \in \versionDomain$, and $\xid \in \xidDomain$:
\begin{align}\label{eq:set-of-version-annotations}
  \upMarker{I}{\xid}{\version}{id},
  \upMarker{U}{\xid}{\version}{id},
  \upMarker{D}{\xid}{\version}{id},
  \cMarker{\xid}{\version}{id}
\end{align}\\[-9mm]
%

\partitle{MV-semiring Annotation Domain}
In the running example, the derivation history of each tuple is a linear sequence of operations applied to a single previous tuple version. However, in the general case a tuple can depend on multiple input tuples, e.g., a query that projects an input relation onto a non-unique column ($\projection_{Position}(Employee)$) or an update that modifies two tuples that are distinct in the input to be the same in the output (e.g., \lstinline!UPDATE Employee SET!\\
\lstinline! ID = 101, Name = Peter!). In MV-semiring annotations this is expressed by combining the variables representing input tuples using operations $+$ and $\times$ in the expressions. 
Fixing a semiring $\semK$, the domain of $\mvK$ is the set of finite symbolic expressions $P$ defined by the grammar shown below where $k \in K$ and $\upMark \in \upMarkers$.\\[-4mm]
\begin{align}\label{eq:mv-grammar}
P \defas k \mid P + P \mid P \times P \mid \upMark(P)
\end{align}\\[-5mm]
For example, consider a query $\projection_{Position}(Employee)$ evaluated over the instance from Figure~\ref{fig:updated-example-instance}. The result tuple \texttt{(Software Architect)} is derived from ${e_1}'$ or, alternatively, from $e_2$ (the two tuples with this value in attribute \texttt{position}) and, thus, would be annotated
with\\[-5mm]
\begin{align*}
&Employee({e_1}') + Employee(e_2)\\
= &\cMarker{T_{7}}{26}{1}(\upMarker{U}{T_{7}}{21}{1}(\cMarker{T_0}{6}{1}(\upMarker{I}{T_0}{2}{1}(x_1)))) + \cMarker{T_0}{6}{2}(\upMarker{I}{T_0}{3}{2}(x_2))
\end{align*}
We would expect certain symbolic expressions produced by the grammar above to be equivalent, e.g., expressions in the embedded semiring $\semK$ can be evaluated using the operations of the semiring ($k_1 + k_2 = k_1 +_{\semK} k_2$) and updating a non-existing tuple does not lead to an existing tuple ($\upMark(0_{\semK}) = 0_{\semK}$). This is achieved by using $\mvDom$, the set of congruence classes (denoted by $\congr{}$) for expressions in $P$ based on the equivalence relations 
as shown in Figure~\ref{fig:congr-for-mvK}.

\begin{defi}\label{def:mv-k}
 Let $\semK = (K,+_{\semK},\times_{\semK},0_{\semK},1_{\semK})$ be a commutative semiring. 
The MV-semiring $\mvK$ for $\semK$ is the structure 
$$\mvK = (\mvDom,+_{\mvK},\times_{\mvK},\congr{0_\semK}, \congr{1_\semK})$$
 
where $\times_{\mvK}$ and $+_{\mvK}$ are defined as  
\begin{align*}
\congr{k} \times_{\mvK} \congr{k'} &= \congr{k \times k'}
&\congr{k} +_{\mvK} \congr{k'} &= \congr{k + k'}
\end{align*}
\end{defi}

The definition of addition and multiplication has to be read as: create a symbolic expression by connecting the inputs with $+$ or $\times$ and then output the congruence class for this expression.
For example, $k = \upMarker{U}{\xid}{\version}{1}(10 + 5)$ is a valid element of $\mvOf{\mathbb{N}}$, the bag semantics MV-semiring, which denotes that a tuple with identifier $1$ was produced by an update ($U$) of transaction $\xid$ at version $\version$. 
This element $k$ is in the same equivalence class as $\upMarker{U}{\xid}{\version}{1}(15)$ based on the equivalence that enables evaluation of addition over elements from $\semK$.

\partitle{Normal Form and Admissible Instances}
We have shown in~\cite{AG16} that $\mvK$ expressions admit a (non unique) normal form representing an element $k \in \mvDom$ as a sum $\sum_{i = 0}^{n} k_i$ where none of the $k_i$ contains any addition operations. Intuitively, each summand corresponds to a tuple under bag semantics. Thus, we will sometimes refer to a summand as a tuple version in the following.
Assuming an arbitrary, but fixed, order over such summands we can address elements in such a sum by position. Following~\cite{AG16} we use $\numInSum{k}$ to denote the number of summands in a normalized annotation $k$ and $\nthOfK{k}{i}$ to refer to the i\textsuperscript{th} element in the sum according to the assumed order. In the definition of updates we will make use of this normal form. Note that not all expressions produced by the grammar in Equation~\eqref{eq:mv-grammar} can be produced by transactional histories. For instance, $\upMarker{U}{\xid}{3}{1}(\upMarker{C}{\xid}{2}{1}(\ldots)$ can never be produced by any history, because it would imply that an update of transaction $\xid$ was applied after the transaction committed. An \textbf{admissible} $\mvK$ database instance is defined as an instance that is the result of applying a transactional history (to be defined later in this section) to an empty input database.

\begin{exam}
  Consider the $\NXv$-relation \texttt{Bonus} from the
  example shown in Figure~\ref{fig:updated-example-instance}. 
  The first tuple ${b_1}'$  is annotated with
$\cMarker{T_{7}}{26}{4}(\upMarker{U}{T_{7}}{22}{4}$ $(\cMarker{T_1}{10}{4}(\upMarker{I}{T_1}{8}{4}(x_4))))$,
  i.e., it was created by an update of Transaction $\xid_7$, that updated a
  tuple inserted by $\xid_1$. Based on the outermost commit annotation we know that this tuple version is visible to transactions starting after version $25$.
%
   We use the relational encoding of $\mvK$-relations from~\cite{AG16} restricted to tuples affected by a given transaction to be able to compute provenance using a regular DBMS and to limit provenance to a transaction of interest for a user.
  Figure~\ref{fig:rel-enc-example} shows the relational encoding of $Bonus$
  restricted to the part of the history corresponding to transaction $\xid_{7}$. We abbreviate relation \texttt{Bonus} as $B$. Version annotations 
  are represented as boolean attributes ($\upAttr_i$ for
  update $u_i$) which are \textit{true} if this part of the provenance has this
  \textit{version annotation} and false otherwise. The
  attributes $\upAttr_1$ and $\upAttr_2$ represent the version annotations for
  the first update ($u_1$) and second update ($u_2$) of $T_{7}$.
The only tuple in the instance represents the
  annotation of tuple (1,101, 2000). The annotation contains only a single version
  annotation $\upMarker{U}{T_{7}}{22}{4}$. Thus,
  only the attribute $\upAttr_2$ for update $u_2$ 
  corresponding to this version annotation is true and the other attribute encoding a
  version annotation is set to false. Variables are encoded as the input tuple annotated with the variable ($b_1$ in the example). 
\end{exam}
\partitle{Queries and Update Operations}
\label{sec:update-and-trans}
We use the definition of positive relational algebra ($\RAPlus$) over $\semK$-relations of~\cite{AG16}. 
Let $t.A$ denote the projection of a
tuple $t$ on a list of projection expressions $A$ and $t[R]$ to denote
the projection of a tuple $t$ on the attributes of relation $R$. For
a condition $\theta$ and tuple $t$, $\theta(t)$ denotes a function
that returns $1_{\semK}$ if $t \models \theta$ and $0_{\semK}$ otherwise. 

\begin{defi}
Let 
$R$ and $S$ denote $\semK$-relations, $\schema{R}$ denote the schema of relation $R$, $t$, $u$ denote tuples, and $k \in K$. 
The operators of $\RAPlus$ on $\semK$-relations are defined as:
\begin{align*}
  \projection_A(R)(t) &= \sum_{u: u.A = t} R(u)
  &(R \union S)(t) &= R(t) + S(t)
\end{align*}\\[-8mm]
\begin{align*}
\selection_\theta(R)(t) &= R(t) \times \theta(t)
 & \asingleton{t'}{k}(t) &=
  \begin{cases}
    k & \mathtext{if} t = t'\\
    0_{\semK} & \mathtext{else}
  \end{cases} 
\end{align*}\\[-8mm]
\begin{align*}
(R \join S)(t) &= R(t[R]) \times S(t[S]) \tag{for any $\schema{R} \union \schema{S}$ tuple $t$} 
\end{align*}
\end{defi}


Updates are also defined using the operations of the MV-semiring, but updates add new version annotations to previous annotations.
%
%
The supported updates correspond to SQL constructs \lstinline!INSERT!, \lstinline!UPDATE!, and \lstinline!DELETE!, and \lstinline!COMMIT!.
An operation is executed at a time $\version$ as part
of a transaction $\xid$. 
Update operations take as input a normalized, admissible $\mvK$-relation $R$ and
return the updated version of this $\mvK$-relation. 
An insertion $\relInsert{Q}{\xid}{\version}{R}$ inserts the result of
query $Q$ into relation $R$. 
The annotations of inserted 
tuples are wrapped in version annotations and are assigned fresh tuple identifiers ($\tid_{new}$).
An update operation
$\relUpdate{\theta}{A}{\xid}{\version}{R}$ applies the projection expressions in $A$ to each tuple that fulfills condition $\theta$. 
Both $\relUpdate{\theta}{A}{\xid}{\version}{R}$ and $\relDelete{\theta}{\xid}{\version}{R}$ wrap the annotations of all tuples
fulfilling condition $\theta$ in version annotations.
A commit  $\commitOp{\xid}{\version}{R}$ adds commit version annotations. 
\begin{defi}\label{def:updates}
Let $R$ be an admissible $\mvK$-relation. 
We use $\version(\up)$ to denote the version (time) when an update $\up$ was executed and $\tidOf(k)$ to denote the id of the outermost version annotation of $k \in \mvDom$. Let $A$ be a list of projection expressions with the same arity as $R$, and $\tid_{new}$ to denote a fresh id that is deterministically created as discussed below. Let $Q$ be a query over a database $D$ such that for every $\asingleton{t}{k}$ operation in $Q$ we have $k \in \semK$. The update operations on $\mvK$-relations are defined as:
\begin{align*}
&\relUpdate{\theta}{A}{\xid}{\version}{R}(t) =
    R(t) \times (\neg \theta)(t) \\ &\mathtab\mathtab\mathtab + \sum_{u: u.A = t} \sum_{i=0}^{\numInSum{R(u)}}
    \upMarker{U}{\xid}{\version+1}{\tidOf(\nthOfK{R(u)}{i})}(\nthOfK{R(u)}{i}) \times \theta(u)
\end{align*}\\[-8mm]
\begin{align*}
\relInsert{Q}{\xid}{\version}{R}(t) &= R(t) + \upMarker{I}{\xid}{\version+1}{\tid_{new}}(Q(D)(t))
\end{align*}\\[-8mm]
\begin{align*}
\relDelete{\theta}{\xid}{\version}{R}(t) &= R(t) \times (\neg \theta)(t)\\ &\mathtab +
\sum_{i = 0}^{\numInSum{R(t)}} \upMarker{D}{\xid}{\version+1}{\tidOf(\nthOfK{R(t)}{i})}(\nthOfK{R(t)}{i}) \times \theta(t)
\end{align*}\\[-8mm]
\begin{align*}
\commitOp{\xid}{\version}{R}(t) &= \sum_{i = 0}^{\numInSum{R(t)}} 
\doCommit{\xid}{\version}{\nthOfK{R(t)}{i}}\\
\end{align*}\\[-4mm]
\resizebox{1\linewidth}{!}{
\begin{minipage}{1.0\linewidth}
\begin{align*}
\doCommit{\xid}{\version}{k} &=
                               \begin{cases}
                                 \cMarker{\xid}{\version+1}{\tid}(k) &\mathtext{if} k = \upMarker{I/U/D}{\xid}{\version'}{\tid}(k')\\
                                 k &\mathtext{else}
                               \end{cases}
\end{align*}
\end{minipage}
}
\end{defi}

As a convention, if an attribute $a$ is not listed in the list of expressions $A$ of an update then $a \to a$ is assumed.
For instance, abbreviating \textit{Software Architect} as \textit{SA} the first update of example transaction $\xid_7$ would be written as
$$\relUpdate{ID=101}{\text{\upshape \textquotesingle}SA\text{\upshape \textquotesingle} \to position}{\xid_7}{20}{Employee}$$
What tuple identifiers are assigned by inserts to new tuples is irrelevant as long as identifiers are deterministic and fulfill certain uniqueness requirements. Thus, we ignore identifier assignment here (see~\cite{AG16} for a detailed discussion).

\BG{REMOVE HOMOMORPHISMS completely?}
\section{Challenges and Contributions}
Adapting the MV-semiring model and reenactment approach to RC-SI is challenging, because the visibility rules of RC-SI are more complex than SI, i.e., different statements within a transaction can see different snapshots of the database. Under SI, the first statement of a transaction $\xid$ sees a snapshot as of the time when $\xid$ started and later statements see the same snapshot and the modifications of previous updates from the same transaction. Under RC-SI, each statement $\up$ also sees modifications of earlier updates from the same transaction, but in addition sees updates of concurrent transactions that committed before $\up$ executed. This greatly complicates the definition of transactional semantics in the MV-semiring model. However, as we will demonstrate it is possible to define RC-SI semantics for MV-annotated databases without extending the annotation model
(only the visibility rules have to be adapted).
Under SI reenactment, queries for individual updates can simply be chained together to construct the reenactment query for a transaction. However, a naive extension of this idea to RC-SI would require us to generate the version of the database seen by a certain statement $\up$ by carefully merging the snapshot of the database at the time of $\up$'s execution with the previous changes by $\up$'s transaction. Thus, while the SI reenactment query for a transaction has to read each updated relation only once, a naive approach for RC-SI would have to read each relation $R$ once for each update that affected it. We present a solution that only has to read each relation once in most cases. 
Consequently, it significantly reduces the complexity of provenance computation for RC-SI transactions. 
%
The main contributions of this work are:

\begin{compactitem}
\item We extend the \textbf{multi-version provenance model}, a provenance model for database queries, updates, and transactions to support RC-SI concurrency control protocol (Section~\ref{sec:history}). 

\item We extend our \textbf{reenactment} approach to support computing provenance of RC-SI workloads and present several novel optimizations that are specific to RC-SI including a technique for reducing the number of relation accesses in reenactment (Section~\ref{sec:reenactment-queries}).




\item Our experimental evaluation demonstrates that reenactment for RC-SI is efficient and scales to large da\-ta\-bases and complex workloads (Section~\ref{sec:experiments}).
\end{compactitem}



\section{Related Work}
\label{sec:related-work}

Green et al.~\cite{GK07} have introduced provenance polynomials and the semiring annotation model which generalizes several other provenance models for positive relational algebra including Why-provenance, minimal Why-provenance~\cite{BK01}, and Lineage~\cite{CW00b}. This model has been studied intensively covering diverse topics such as 
relations annotated with annotations from multiple semirings~\cite{KB12}, rewriting queries to minimize provenance~\cite{AD11b}, factorization of provenance polynomials~\cite{OZ11}, extraction of provenance polynomials from the PI-CS~\cite{GM13} and Provenance Games~\cite{KL13} models, and extensions to set difference~\cite{GP10} and aggregation~\cite{AD11d}.
Systems such as DBNotes~\cite{BC05a}, LogicBlox~\cite{GA12}, Perm~\cite{GM13}, Lipstick~\cite{AD11c}, and others encode provenance annotations as standard relations and use query rewrite techniques to propagate these annotations during query processing. 
Many use cases such as auditing and post-mortem transaction debugging require provenance for update operations and particularly transactions. 
In~\cite{AG16,AG14}, we have introduced an extension of the semiring model for SI transactional histories that is the first provenance model supporting concurrent transactions and have pioneered the reenactment approach for computing such provenance over regular relational databases.
Several papers~\cite{BC08a,VC07,AD13a} study provenance for updates, e.g., 
Vansummeren et al.~\cite{VC07} compute provenance for SQL DML statements. This approach alters updates to eagerly compute provenance. 
However, developing a provenance model for transactional updates is more challenging as it requires to consider the complex interdependencies between tuple
versions that are produced by concurrent transactions under different isolation levels. In this work we present the non-trivial extensions of our previous approach~\cite{AG16,AG14} to efficiently support the RC-SI protocol which is widely used in practice. 
\section{Read-Committed SI Histories} 
\label{sec:history}

\begin{figure*}[t]
  \centering

\subfloat[Historic relation $\relV{\rel}{\xid}{\version}$: version of $\rel$ seen by Transaction $\xid$ at Time $\version$]{\label{fig:si-hist-db-definitions-a}
  \begin{minipage}{1.0\linewidth}
\vspace{-3mm}
    \begin{align*}
      \relV{\rel}{\xid}{\version} &=
      \begin{cases}
        \emptyset & \mathtext{if} \version < \start{\xid}\\
        \relCV{\rel}{\version} & \mathtext{if} \start{\xid} = \version\\
        u(\relVE{\rel}{\xid}{\version -1}) & \mathtext{if} \exists u \in \xid: \version(u) = \version - 1 \wedge u \mathtext{updates} R \wedge \finish{\xid} \neq \version - 1\\
        \commitOp{\xid}{\version - 1}{\relV{\rel}{\xid}{\version - 1}} & \mathtext{if} \finish{\xid} = \version - 1\\
        \relV{\rel}{\xid}{\version - 1} & \mathtext{otherwise} 
      \end{cases}
    \end{align*}
  \end{minipage}
}\\[1mm]
\begin{minipage}{1.0\linewidth}
  \subfloat[$\relVE{\rel}{\xid}{\version}$: Tuple versions visible within Transaction $\xid$ at Time
  $\version$]{\label{fig:si-hist-db-definitions-b}
    \begin{minipage}{1.0\linewidth}
      \vspace{-2mm}
      \begin{align*}
       \relVE{\rel}{\xid}{\version}(t) &= \hspace{-2mm} \sum_{i=0}^{\numInSum{\relCV{\rel}{\version}(t)}} \hspace{-2mm} \nthOfK{\relCV{\rel}{\version}(t)}{i} \times \validEx (\xid,t, \nthOfK{\relCV{\rel}{\version}(t)}{i}, \version)
+ \hspace{-4mm}
\sum_{i=0}^{\numInSum{\relV{\rel}{\xid}{\version}(t)}} \hspace{-4mm} \nthOfK{\relV{\rel}{\xid}{\version}(t)}{i} \times \validIn (\xid,t, \nthOfK{\relV{\rel}{\xid}{\version}(t)}{i}, \version)
      \end{align*}\\
    \end{minipage}
  }
\end{minipage}\\[-4mm]
\begin{minipage}{1.0\linewidth}
  \subfloat[$\relCV{\rel}{\version}$: Committed tuple versions at Time
  $\version$]{\label{fig:si-hist-db-definitions-c}
    \begin{minipage}{1.0\linewidth}
      \vspace{-2mm}
      \begin{align*}
        \relCV{\rel}{\version}(t) &= \sum_{\xid \in \history \wedge \finish{\xid} < \version}
                                    \sum_{i=0}^{\numInSum{\relV{\rel}{\xid}{\version}(t)}} \nthOfK{\relV{\rel}{\xid}{\version}(t)}{i} \times \validAt(\xid, t, \nthOfK{\relV{\rel}{\xid}{\version}(t)}{i}, \version)
      \end{align*}\\
    \end{minipage}
  }
\end{minipage}\\[-4mm]
\begin{minipage}{1.0\linewidth}
  \subfloat[Validity of summands (tuple versions) within annotations]{\label{fig:si-hist-db-definitions-d}
    \begin{minipage}{1.0\linewidth}
      \vspace{-4mm}
      \begin{align*}
        \validIn(\xid,t,k, \version) &= 1 \mathtext{if} \exists \version', k',\tid: k = \upMarker{X}{\xid}{\version'}{\tid}(k') \wedge X \in \{U,D,I\}, 0
                                       \mathtext{otherwise} \\
\validEx(\xid,t,k, \version) &= 0 \mathtext{if} \hasUp(T,t,k, \version), 1 \mathtext{otherwise}
       \end{align*}\\[-10mm]
      \begin{align*}
        \validAt(T,t,k,\version) &= 1 \mathtext{if} k = \cMarker{\xid}{\version'}{\tid}(k') \wedge
          (\neg\exists \xid' \neq \xid: \finish{{\xid'}} \leq \version
          \wedge \hasUp(\xid',t,k, \version)), 0 \mathtext{otherwise}
      \end{align*}\\[-10mm]
      \begin{align*}
        \hasUp(T,t,k,\version) &\Leftrightarrow \exists u \in T, t', i, j :
                        \version(\up) < \version \wedge
                        \nthOfK{\relV{\rel}{\xid}{\version(u)}(t)}{i} = k \wedge
                        \nthOfK{\relV{\rel}{\xid}{\version(u) + 1}(t')}{j} = \upMarker{X}{\xid}{\version(u)+1}{id}(k) \wedge X \in \{U,D\}
      \end{align*}\\[-7mm]
    \end{minipage}
  }
\end{minipage}

  \caption{Historic relational instances induced by History $\history$. $\relV{\rel}{\xid}{\version}$ is the annotated instance visible by Transaction $\xid$ at version $\version$. $\relCV{\rel}{\version}$ is the instance containing all changes of transactions committed before version $\version$. Each update of a transaction sees all modifications of previous updates from the same transaction as well as modifications of transactions committed before the update was run ($\relVE{\rel}{\xid}{\version}$).}
  \label{fig:si-hist-db-definitions}
\end{figure*}

We now define the semantics of RC-SI histories over $\mvK$-relations. Importantly,
our extension uses standard MV-semirings and update operations. 
%
A \textbf{transaction} $\xid = \{\up_1,$ $\ldots,\up_n, c\}$ is a sequence of update
operations followed by a commit operation ($c$)
 with $\version(\up_i) < \version(\up_j)$ for $i < j$.  
A \textbf{history} $\history = \{\xid_1, \ldots, \xid_n\}$ over a database $\db$ is a set of transactions
 over $\db$ with at most one operation at each version $\version$.
We use $\start{\xid} = \version(\up_1)$ and $\finish{\xid} = \version(c)$ to denote the time when transaction $\xid$ did start (respective did commit). Note that the execution order of operations is encoded in the updates itself, because each update $\up$ in the MV-semiring model is associated with a version identifier $\version(\up)$ determining the order of operations. 


Given a RC-SI history $\history$ we define $\relCV{\rel}{\version}$,  the annotated state of relation $\rel$ at a time $\version$ and  $\relV{\rel}{\xid}{\version}$, the annotated state of relation $\rel$ visible to transaction $\xid$ at time $\version$. Note that these two states may differ, because transaction $\xid$'s updates only become visible to other transactions after $\xid$ has committed. 
As in~\cite{AG16} we assume that histories are applied to an empty initial database.  
For instance, Figure~\ref{fig:updated-example-instance} shows a subset of $\dbCV{26}$, the version of the example DB after execution of the history (Figure~\ref{fig:Transactions Example}) over $\dbCV{18}$ (shown in Figure~\ref{fig:running-example-instance}). The database state $\dbCV{18}$ is the result of running Transaction $\xid_0$ that inserted the content of the \texttt{Employee} relation and $\xid_1$, $\xid_2$, $\xid_4$ which created the tuples in relation \texttt{Bonus}. 

\begin{defi}
  Let $\history$ be a history over a database $\db$.  
  The version $\relCV{\rel}{\version}$ of relation $\rel \in \db$  at time $\version$ and the version $\relV{\rel}{\xid}{\version}$ of relation $\rel$ visible within transaction $\xid \in \history$ at time $\version$ are defined in Figure~\ref{fig:si-hist-db-definitions}.

%
\end{defi}


\partitle{Figure~\ref{fig:si-hist-db-definitions-a}: Relation Version in Transaction $\xid$ at Time $\version$}
To define the content of relation $\rel$ at time $\version$ within transaction $\xid$ we have to distinguish between several cases: 
1) per convention $\relV{\rel}{\xid}{\version}$ is empty for any $\version < \start{\xid}$; 
2) at the start of transaction $\xid$, $\relV{\rel}{\xid}{\version}$ is same as $\relCV{\rel}{\version}$, the version of the relation containing changes of transactions committed before $\version$; 
3) if an update was executed by transaction $\xid$ at time $\version-1$ then its effect is reflected in $\relV{R}{\xid}{\version}$. The update will see tuple versions created by transactions that committed before $\version -1$ and tuple versions created by the transaction's own updates. We use $\relVE{\rel}{\xid}{\version-1}$ to denote this version of $\rel$ and  explain its construction below; 
4) right after transaction commit, the current version of the relation visible within $\xid$ is the result of applying the commit operator to the previous version;
and 5) as long as there is no commit or update on $R$ at $\version - 1$ then the current version of relation $\rel$ is the same as the previous one. 

\partitle{Figure~\ref{fig:si-hist-db-definitions-b}: Relation Version Visible to Updates}
As mentioned above we use $\relVE{\rel}{\xid}{\version}$ to denote the version of relation $\rel$ that is visible to an update of transaction $\xid$ executed at time $\version$. This state of relation $\rel$ contains all tuple versions created by committed transactions as long as they have not been overwritten by a previous update of transaction $\xid$ (the first sum) and tuple versions created by previous updates of transaction $\xid$ (the second sum). Here by overwritten we mean that a tuple version is no longer valid, because either it has been deleted or because it was updated and, thus, it has been replaced with a new updated version. Function $\validEx$ implements this check. It returns $1$ if the tuple version has not been overwritten and $0$ otherwise.
This function uses a predicate $\hasUp(T,t,k,\version)$ which is true if transaction $\xid$ has invalidated summand $k$ in the annotation of tuple $t$ before $\version$ by either deleting or updating the corresponding tuple version.
The second sum ranges over tuple versions $\relV{\rel}{\xid}{\version}$ excluding tuple versions not created by transaction $\xid$ (function $\validIn$). 

\partitle{Figure~\ref{fig:si-hist-db-definitions-c}: Committed Relation Version}
The committed version $\relCV{\rel}{\version}$ of a relation $\rel$ at time $\version$ contains all changes of transactions that committed before $\version$. That is, all tuple versions created by any such transaction unless the tuple version is no longer valid at $\version$, e.g., it got deleted by another transaction. Thus, this version of relation $\rel$ can be computed as the sum over all annotations on tuple $t$ in the versions of relation $\rel$ created by past transactions. However, in addition to ensuring that outdated tuple versions are not considered we also need to ensure that every tuple version is only included once. Both conditions are modelled by function $\validAt(\xid,t,k,\version)$ that return $1$ if $k$ is a summand (tuple version) in the annotation of tuple $t$ at time $\version$ and was created by $\xid$ (this ensures that each tuple version is only added once).

\begin{exam}
  Consider the example transactional history from Figure~\ref{fig:Transactions Example}. For instance, $\relV{Bonus}{\xid_8}{22}$ is the version of the \texttt{Bonus} relation seen by the insert operation of Transaction $\xid_8$ and is equal to $\relCV{Bonus}{22}$ (case 2, Figure~\ref{fig:si-hist-db-definitions-a}). It contains the tuples from the \texttt{Bonus} relation as shown in Figure~\ref{fig:running-example-instance}, because these tuples were created by transactions that committed before time $22$ (they are in $\relV{Bonus}{\xid_8}{22}$). Thus, $\validAt$ returns $1$ for these tuples. 
For instance, tuple $b_1$ has been updated by Transaction $\xid_7$ (the new version is denoted as ${b_1}'$) before version $22$, but this transaction has not committed yet. Since $\xid_8$ has not updated $b_1$, $\validEx$ returns $1$ and the full annotation of $b_1$ in $\relCV{Bonus}{22}$ is as shown in Figure~\ref{fig:running-example-instance}.
\end{exam}
\section{Reenactment}
\label{sec:reenactment-queries}

We have introduced reenactment~\cite{AG16} as a mechanism to
construct a $\mvK$-annotated relation $R$ produced by a transaction $\xid$ that is part of a history $\history$ by running a so-called reenactment query
$\ract(\xid)$.  We have proven~\cite{AG16} that $\ract(\xid) \equiv_{\ppSRV} \xid$, i.e., the
reenactment query returns the same annotated relation as the original
transaction ran in the context of history $\history$ (has the same result and
provenance). 
%
In this work, we present reenactment for single RC-SI transactions as well as extensions necessary to reenact a whole history. The latter requires the introduction of a operator which merges the relations produced by the reenactment queries of several transactions. This operator is also needed to compute $\relVE{\rel}{\xid}{\version}$ as introduced in the previous section. After introducing this operator, we first present a method to reenact RC-SI transactions that requires merging newly committed tuples into the version of a relation visible within the reenacted transaction after every update. We then present an optimization that requires no merging in most cases and uses another new operator - version filtering. 



\partitle{Version Annotation Operator}
For reenactment of updates and transactions we need to be able to introduce new version annotations in queries. 
However, the operators of ${\cal RA}^+$ do not support that. To address this problem, we have defined the version annotation operator in~\cite{AG16}.
For $X \in \{I,U,D\}$ the version annotation operator $\annotOp{X}{\xid}{\version}{}(R)$
takes as input a $\mvK$-relation $R$ and wraps every summand in a tuple's annotation in $\upMarker{X}{\xid}{\version}{}$. The commit annotation operator $\annotOp{C}{\xid}{\version}{}(R)$ only wraps summands produced by Transaction $\xid$ using operator $\doCommit{\xid}{\version}{}$ from Definition~\ref{def:updates}.
\begin{align*}
  \annotOp{X}{\xid}{\version}{}(R)(t) =
 \begin{cases}
\doCommit{\xid}{\version}{\nthOfK{R(t)}{i}} &\mathtext{if} X  = C\\
\sum_{i = 0}^{\numInSum{R(t)}} \upMarker{X}{\xid}{\version}{}(\nthOfK{R(t)}{i})   &\mathtext{otherwise}  \\
 \end{cases}
\end{align*}


\partitle{Reenacting Updates}
%
Reenactment queries for transactions are constructed from reenactment queries for single update statements.
The reenactment query $\ract(u)$ for an update $u$ returns the modified version of the relation targeted by the update if it is evaluated over the database state seen by $u$'s transaction at the time of the update $u$  ($\relVE{\rel}{\xid}{\version(u)}$). The semantics of update operations is the same no matter whether SI or RC-SI is applied. Thus, we can use the technique we have introduced for SI in~\cite{AG16} to also reenact RC-SI updates.
As we will see later, it will be beneficial to let update reenactment queries operate over a different input for RC-SI than for SI which requires modifications to the update reenactment queries. 
Let $\history$ is a history over database $D$.
Below we show the definitions of update reenactment  queries  from~\cite{AG16}. The reenactment
query $\ract(\up)$ for operation  $\up$  in $\history$ is:\\[-11mm]
\begin{center}
\resizebox{1\linewidth}{!}{
  \begin{minipage}{1.0\linewidth}
\begin{align*}
\ract(\relUpdate{\theta}{A}{\xid}{\version}{R}) &= \annotOp{U}{\xid}{\version+1}{}(\projection_{A}(\selection_\theta(\relV{R}{\xid}{\version}))) \union \selection_{\neg \theta}(\relV{R}{\xid}{\version})\\ 
\ract(\relInsert{Q}{\xid}{\version}{R}) &= \relV{R}{\xid}{\version} \union \annotOp{I}{\xid}{\version+1}{}(Q(\dbV{\xid}{\version})) \\
\ract(\relDelete{\theta}{\xid}{\version}{R}) &= \annotOp{D}{\xid}{\version+1}{}(\selection_\theta(\relV{R}{\xid}{\version})) \union \selection_{\neg \theta}(\relV{R}{\xid}{\version})
\end{align*}
\end{minipage}
}
\end{center}

For example, an update modifies a relation by applying the expressions from $A$ to
 tuples that match the update condition $\theta$. All other tuples are not affected. Thus, the result of an update can be computed as the
union between these two sets. 
%
%
For instance, the 
  reenactment query $\ract(u_2)$ for the update
  $u_2$ of running example transaction $\xid_7$ is:
  \begin{align*}
    &\annotOp{U}{\xid_7}{21}{}(\projection_{ID,EmpID,Amount+1000 \to Amount}(\\&\mathtab\mathtab\mathtab\selection_{ID=101}(\relV{Bonus}{\xid_7}{21})))
    \union \selection_{ID \neq 101}(\relV{Bonus}{\xid_7}{21})
  \end{align*}
\partitle{Transaction and History Reenactment}
\label{sec:trans-reen}
To reenact a transaction $\xid$, we have to connect reenactment queries for the updates of $\xid$ such that the input of every update $u$ over relation $R$ is $\relVE{R}{\xid}{\version(\up)}$. As discussed in Section~\ref{sec:history}, this instance of relation $R$ contains tuple versions updated by previous updates of $\xid$ which targeted $R$ as well as tuple versions from $\relCV{R}{\version}$. Hence, $\relVE{R}{\xid}{\version(\up)}$ can be computed as a union between these two sets of tuples as long as we 
can filter out tuple versions (summands in annotations) that are no longer valid. We now introduce a new query operator that implements this filtering and then define  reenactment for RC-SI transactions using this operator.

\partitle{Version Merge Operator}
The version merge operator $\vMerge(R_1,R_2)$ is used to merge two version $R_1$ and $R_2$ of a relation $R$ such that 1) tuple versions (summands in annotations) present in both inputs are only included once in the output and 2) if both inputs include different versions of a tuple, then only the newer version is returned. 
This operator is used to construct $\relVE{R}{\xid}{\version}$ from a union of $\relCV{R}{\version}$ and $\relV{R}{\xid}{\version}$.
The definition of $\vMerge(R_1,R_2)$ is shown below. 
\begin{align*}
      \vMerge(R_1,R_2)(t) &= \sum_{i=0}^{\numInSum{R_1(t)}} R_1(t)[i] \times \isMax(R_2,R_1(t)[i]) \\ &+\sum_{i=0}^{\numInSum{S(t)}} R_2(t)[i] \times \isStrictMax(R_1,R_2(t)[i])\\
 \end{align*}\\[-8mm]
The operator uses two functions $\isMax$ and $\isStrictMax$. $\isMax(R,k)$ returns $0$ if relation $R$ contains a newer version of the tuple version encoded as annotation $k$, i.e., if $\exists t',k',j : \idOf(R(t')[j]) = \idOf(k) \wedge \versionOf(R(t')[j]) > \versionOf(k)$. Function $\isStrictMax$ is the strict version of $\isMax$ which also returns $0$ if the tuple version $k$ is present in $R$, i.e., $\versionOf(R(t')[j]) > \versionOf(k)$ is replaced with $\versionOf(R(t')[j]) \geq \versionOf(k)$ in the condition. Here function $\idOf(k)$ returns the tuple identifier in the annotation $k$ and  $\versionOf$  returns the version encoded in the given annotation $k$. These functions are well defined if $k$ is a summand in a normalized admissible $\mvK$-relation (see Section~\ref{sec:prov-model-vers}):
 \begin{align*}
         \idOf(\upMarker{X}{\xid}{\version}{\tid}(k')) &=  \tid 
      &\versionOf(\upMarker{X}{\xid}{\version}{\tid}(k')) &= \version 
 \end{align*}
As an example consider computing $\vMerge(\relCV{Bonus}{26}, \relCV{Bonus}{19})$. These relation versions are shown in Figure~\ref{fig:running-example-instance} and~\ref{fig:updated-example-instance}. The later only shows new or updated tuples. For instance, $b_2$ is present in both relations with the same annotation, a single summand. Thus, the first sum in $\vMerge(\relCV{Bonus}{26}, \relCV{Bonus}{19})(b_2)$ will include this annotation (there is no newer version of this tuple in $\relCV{Bonus}{19}$) while it will be excluded from the second sum (the same annotation is found in $\relCV{Bonus}{26}$). As another example consider tuple $b_1$ which was updated to ${b_1}'$ by Transaction $\xid_7$. Thus, $\vMerge(\relCV{Bonus}{26}, \relCV{Bonus}{19})(b_1) = 0$, because a newer version of this tuple exists in $\relCV{Bonus}{26}$ and $\vMerge(\relCV{Bonus}{26}, \relCV{Bonus}{19})({b_1}')  = \relCV{Bonus}{26})({b_1}')$ (this is the newest version of this tuple found in $\relCV{Bonus}{19}$ and $\relCV{Bonus}{26}$).

\partitle{Reenacting Transactions}
For simplicity of exposition we present the construction of reenactment queries for transactions updating a single relation $\rel$. The construction for transactions updating multiple relations is achieved analog to~\cite{AG16}. The reenactment query for Transaction $\xid = (u_1, \ldots, u_n, c)$ executed as part of an RC-SI history $\history$ is recursively constructed starting with a commit annotation operator applied to the reenactment query $\ract(u_n)$ for the  last update of $\xid$. Then we replace $\relV{R}{\xid}{\version(u_n)}$ in the query constructed so far with $\vMerge(\ract(u_{n-1}),\relCV{R}{\version(u_n)})$. The result of this version merge operator is $\relVE{R}{\xid}{\version(u_n)}$, the input seen by $u_n$ in the history $\history$.  This replacement process is repeated for $i \in {n-1, \ldots, 1}$  until every reference to a version of relation $R$ visible within the transaction has been replaced with references to committed relation versions ($\relCV{R}{\version}$ for some $\version$). The  structure of the reenactment query is outlined below.\\[-7mm]
\begin{center}
  \resizebox{1\linewidth}{!}{
    \begin{tikzpicture}[
nonmat/.style={rectangle,draw=black!100,fill=black!10,thick},
mat/.style={rectangle,draw=red!70,fill=red!10,thick},
qop/.style={circle,draw=red!70,fill=red!10,thick},
trans/.style={->,shorten <=1pt,,semithick},
dep/.style={->,dotted,shorten <=1pt,semithick},
labl/.style={blue}
]
\node (R1) at (-3,0) [nonmat] {$\relCV{R}{\version(u_1)}$};
\node (RU1) at (-1,0) [mat] {$\ract(u_1)$};

\node (R2) at (-1,-1) [nonmat] {$\relCV{R}{\version(u_2)}$};
\node (VM2)at (0.25,0) [qop] {$\vMerge$};
\node (RU2) at (1.5,0) [mat] {$\ract(u_2)$};

\node (R3) at (1.5,-1) [nonmat] {$\relCV{R}{\version(u_3)}$};
\node (VM3)at (2.75,0) [qop] {$\vMerge$};
\node (RU3) at (4,0) [mat] {$\ract(u_3)$};

\node (RUn1) at (7,0) [mat] {$\ract(u_{n-1})$};
\node (Rn) at (7,-1) [nonmat] {$\relCV{R}{\version(u_n)}$};
\node (VMn)at (8.4,0) [qop] {$\vMerge$};
\node (RUn) at (9.5,0) [mat] {$\ract(u_n)$};

\draw (R1) edge [trans] (RU1);

\draw (RU1) edge [trans] (VM2);
\draw (R2) edge [trans] (VM2);
\draw (VM2) edge [trans] (RU2);

\draw (RU2) edge [trans] (VM3);
\draw (R3) edge [trans] (VM3);
\draw (VM3) edge [trans] (RU3);

\draw (RU3) edge [dep] (RUn1);

\draw (RUn1) edge [trans] (VMn);
\draw (Rn) edge [trans] (VMn);
\draw (VMn) edge [trans] (RUn);

    \end{tikzpicture}
  }
\end{center}





\partitle{Reducing Relation Accesses}
We would like reenactment queries for RC-SI to be defined recursively without
requiring to recalculate the right mix of tuple versions from transaction $\xid$
and from concurrent transactions after each update. 
To this end we introduce the version filter operator, that filters out summands $k$ from an annotation based on the version encoded in the outermost version annotation of $k$.
The filter condition $\theta$ of a version filter operator is expressed using a pseudo attribute $V$ representing the $\version$ encoded in version annotations.  We use this operator to filter summands from annotations based on
the version annotations they are wrapped in.


\BG{tuple-ids are enough, We can get rid of all model extensions.}



\partitle{Version Filter Operator}
The version filter operator removes summands from an annotation based on the time $\version$ in their outermost version annotation. 
  Let $\theta$ be a condition over 
  pseudo
  attribute $V$
  . Given a summand $k =
  \upMarker{X}{\xid}{\version}{i}(k')$ such a condition is
  evaluated by replacing $V$ with $\version$ in $\theta$. The version filter operator using such a condition $\theta$
  is defined as:
  \begin{align*}
    \vFilt{\theta}(R)(t) &= \sum_{i=0}^{\numInSum{R(t)}} R(t)[i] \times \theta(R(t)[i])
  \end{align*}

For example, we could use $\vFilt{V < 11}(R)$ to filter out summands from annotations of tuples from a relation $R$ that were added after time $10$. In contrast to regular selection, a version filter's
condition is evaluated over the individual summands in an annotation. 

Our optimized reenactment approach for RC-SI is based on the following observation. Consider a tuple $t$ updated by Transaction $\xid$ and let $u \in \xid$ be the first update of Transaction $\xid$ that modified this tuple. 
Let $t'$ denote the version of tuple $t$ valid before $u$. Given the RC-SI semantics, $t'$ is obviously present in $\relCV{\rel}{\version(u)}$ and was produced by a transaction that committed before $\version(u)$. Importantly, $t'$ is guaranteed to be in $\relCV{\rel}{\finish{\xid}}$, i.e, the version of $\rel$ immediately before the commit of Transaction $\xid$. To see why this is the case recall that $\xid$ would have obtained a write-lock on this tuple to be able to update $t'$ to $t$ and this write-lock is held until transaction commit. Thus, it is guaranteed that no other transaction would have been able to update $t'$ before the commit of $\xid$. Based on this observation, we can use $\relCV{\rel}{\finish{\xid}}$ as an input to the reenactment query as long as we ensure that the reenactment queries for other updates of $\xid$ executed before $u$ ignore $t'$. We achieve this using the version filter operator to filter out tuple versions that were not visible to an update $u'$. It is applied in the input of the part of the transaction reenactment query corresponding to the update $u'$. In the optimized reenactment query, the initial input of reenactment is $\relCV{\rel}{\finish{\xid}}$ instead of $\relCV{\rel}{\start{\xid}}$. 
Furthermore, the update reenactment queries are modified as shown below. An optimized reenactment query $\ract_{opt}(u)$ for update $u$ passes on unmodified versions  of tuples that are not visible to update $u$. We use $\ract_{opt}(\xid)$ to denote the optimized transaction reenactment query. In the formulas shown below, $R$ denotes the result of the reenactment query for the previous update or $\relCV{\rel}{\finish{\xid}-1}$ (in case the update is the first update of the transaction). Note that this optimization is only applicable if the inserts in the transaction do not access the relation that is modified by the updates and deletes of the transaction. That is because the query of an insert may read tuple version that are not in $\relCV{D}{\finish{\xid}}$. Hence, we only apply this optimization if the inserts of Transaction $\xid$ use the \lstinline!VALUES! clause (the singleton operator $\asingleton{t}{k}$ as defined in Section~\ref{sec:prov-model-vers}).
%
\begin{align*}
\ract_{opt}(\relUpdate{\theta}{A}{\xid}{\version}{R}) &= \annotOp{U}{\xid}{\version+1}{}(\projection_{A}(\selection_\theta(\vFilt{V \leq \version(u)}(R))))\\ 
                                                  &\mathtab\union \selection_{\neg \theta}(\vFilt{V \leq \version(u)}(R)) \\
                                                  &\mathtab\union \vFilt{V > \version(u)}(R))\\ 
\ract_{opt}(\relDelete{\theta}{\xid}{\version}{R}) &= \annotOp{D}{\xid}{\version+1}{}(\selection_\theta(\vFilt{V \leq \version(u)}(R)))\\ 
                                               &\mathtab\union \selection_{\neg \theta}(\vFilt{V \leq \version(u)}(R))\\
                                               &\mathtab\union \vFilt{V > \version(u)}(R)
\end{align*}

For example, the reenactment query for an update $u$ distinguishes between three disjoint cases: 1) a tuple that is visible to the update ($V \leq \version(u)$) and fulfills the update's condition, i.e., the tuple is updated by $u$; 2) a tuple that is visible to the update, but does not fulfill the condition $\theta$; and 3) a tuple version that is not visible to $u$, because it was created by a transaction that committed after $\version(u)$. The structure of the resulting reenactment query 
for transactions without inserts
is shown below. Note that relation $\rel$ is only accessed once by the reenactment query. 

\begin{center}
  \resizebox{1\linewidth}{!}{
    \begin{tikzpicture}[
nonmat/.style={rectangle,draw=black!100,fill=black!10,thick},
mat/.style={rectangle,draw=red!70,fill=red!10,thick},
qop/.style={circle,draw=red!70,fill=red!10,thick},
trans/.style={->,shorten <=1pt,,semithick},
dep/.style={->,dotted,shorten <=1pt,semithick},
labl/.style={blue}
]
\node (R1) at (-3,0) [nonmat] {$\relCV{R}{\finish{\xid} - 1}$};
\node (RU1) at (-0.5,0) [mat] {$\ract_{opt}(u_1)$};

\node (RU2) at (1.5,0) [mat] {$\ract_{opt}(u_2)$};

\node (RU3) at (4,0) [mat] {$\ract_{opt}(u_3)$};

\node (RUn1) at (7,0) [mat] {$\ract_{opt}(u_{n-1})$};
\node (RUn) at (9.5,0) [mat] {$\ract_{opt}(u_n)$};

\draw (R1) edge [trans] (RU1);

\draw (RU1) edge [trans] (RU2);

\draw (RU2) edge [trans] (RU3);

\draw (RU3) edge [dep] (RUn1);

\draw (RUn1) edge [trans] (RUn);

    \end{tikzpicture}
  }
\end{center}

For each insert using the \lstinline!VALUES! clause a new tuple will be added to the relation $\rel$ using \lstinline!UNION!.

Reenactment queries for RC-SI transactions are  equivalent to the transaction
they are reenacting.

\begin{theo}\label{theo:update-reenactment-equivalence-RCSI}
  Let $\xid$ be a RC-SI transaction. 
  Then, 
$\xid \equiv_{\ppSRV} \ract(\xid) \equiv_{\ppSRV} \ract_{opt}(\xid)$.
\end{theo}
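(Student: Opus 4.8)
The plan is to establish the two equivalences $\xid \equiv_{\ppSRV} \ract(\xid)$ and $\ract(\xid) \equiv_{\ppSRV} \ract_{opt}(\xid)$ separately, where equivalence of a transaction and a query is understood, as in~\cite{AG16}, with respect to the relation state(s) the transaction produces, and where the case of transactions touching several relations reduces to the single-relation case as in~\cite{AG16}. For the first equivalence I would build on the SI result of~\cite{AG16}: the semantics of the individual update operators (Definition~\ref{def:updates}) is unchanged between SI and RC-SI, so the only genuinely new ingredient is the way RC-SI assembles the per-statement snapshot $\relVE{R}{\xid}{\version}$, which is precisely what the version merge operator $\vMerge$ is designed to reconstruct. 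For the second equivalence the crux is the write-lock argument sketched above the theorem: the pre-image of every tuple modified by $\xid$ survives, unchanged, in $\relCV{R}{\finish{\xid}-1}$.

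\textbf{First equivalence.} I would first restate the update-reenactment correctness of~\cite{AG16} as Lemma~A: for every update $u \in \xid$ over $R$, evaluating $\ract(u)$ over $\relVE{R}{\xid}{\version(u)}$ returns $\relV{R}{\xid}{\version(u)+1}$ (case~3 of Figure~\ref{fig:si-hist-db-definitions-a}), and similarly for inserts and deletes; this carries over verbatim since the update operators are identical. Next I would prove the central new Lemma~B: for every update $u_i$ of $\xid$,
\[ \vMerge\bigl(\relV{R}{\xid}{\version(u_i)},\,\relCV{R}{\version(u_i)}\bigr) \;=\; \relVE{R}{\xid}{\version(u_i)} . \]
This is shown by comparing the definitions of $\vMerge$ and of $\relVE$ (Figure~\ref{fig:si-hist-db-definitions-b}) tuple identifier by tuple identifier: if an identifier has already been modified by an earlier update of $\xid$, the write-lock property of RC-SI guarantees that $\relCV{R}{\version(u_i)}$ still holds exactly the pre-image $\xid$ overwrote, so $\isMax$/$\isStrictMax$ retain precisely $\xid$'s current version, matching the $\validIn$/$\validEx$ bookkeeping; otherwise both sides keep the more recent of the start-snapshot version and the committed version. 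With Lemmas~A and~B I would conclude by induction on $i$ that the partial reenactment query built from $u_1,\dots,u_i$ --- whose innermost input reference $\relV{R}{\xid}{\version(u_1)}$ equals $\relCV{R}{\version(u_1)} = \relCV{R}{\start{\xid}} = \relVE{R}{\xid}{\version(u_1)}$ in the base case --- evaluates to $\relV{R}{\xid}{\version(u_i)+1}$; the inductive step applies Lemma~B (the fresh $\vMerge$ yields $\relVE{R}{\xid}{\version(u_i)}$) followed by Lemma~A (apply $u_i$). Applying the commit annotation operator on top reproduces $\commitOp{\xid}{\finish{\xid}}{\cdot}$ (case~4 of Figure~\ref{fig:si-hist-db-definitions-a}), giving $\relV{R}{\xid}{\finish{\xid}+1}$, the state produced by $\xid$.

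\textbf{Second equivalence.} Here I would first record the invariant behind the optimization: if $u$ is the first update of $\xid$ modifying a tuple $t$ and $t'$ is the version of $t$ valid just before $u$, then $t'$ was committed before $\version(u)$ and, since $\xid$ holds a write lock on $t$ from $\version(u)$ until $\finish{\xid}$, no concurrent transaction can overwrite $t'$ in between; hence $t'$ occurs in $\relCV{R}{\finish{\xid}-1}$ carrying its original version annotation (of version $\leq \version(u)$). The requirement that the inserts of $\xid$ use the \lstinline!VALUES! clause ensures no query of $\xid$ reads $R$, so this is the only way tuple versions of $R$ can enter the computation. I would then prove, by induction over $u_1,\dots,u_n$, that when $\ract_{opt}$ runs over the single input $\relCV{R}{\finish{\xid}-1}$ the state reaching the sub-query for $u_i$ agrees with $\relVE{R}{\xid}{\version(u_i)}$ on the tuple identifiers $\xid$ modifies: the filter $\vFilt{V \leq \version(u_i)}$ exposes to $u_i$ exactly the versions committed by time $\version(u_i)$ --- in particular, by the invariant, the correct pre-images of all tuples $\xid$ modifies --- while the complementary $\vFilt{V > \version(u_i)}$ branch carries every not-yet-visible version forward untouched so it is available when a later update becomes responsible for it; each \lstinline!VALUES! insert contributes its new tuple through the additional union. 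By Lemma~A, $\ract_{opt}(u_i)$ then effects the same modification of the $\xid$-relevant tuples as $\ract(u_i)$ does in the non-optimized chain, and the closing commit annotation operator is the same in both; hence the two reenactment queries agree on the output of $\xid$.

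\textbf{Main obstacle.} The delicate step is the second equivalence: making the ``single initial input plus version filters faithfully simulates the chain of $\vMerge$s'' argument rigorous. It requires tracking, for every summand and every update $u_i$, which of the three disjoint branches of $\ract_{opt}(u_i)$ it flows through, and it relies essentially on the write-lock semantics of RC-SI to know that the pre-images needed by $\xid$'s updates are still intact in $\relCV{R}{\finish{\xid}-1}$; the \lstinline!VALUES!-clause hypothesis on inserts is exactly what excludes the single case where this fails. The case analysis behind Lemma~B also needs care with admissible, normalized annotations and bag multiplicities, but is otherwise mechanical.
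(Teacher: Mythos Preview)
Your proposal is correct and follows essentially the same route as the paper's proof: the same decomposition into two equivalences, the same induction on the sequence of updates, the same central lemma that $\vMerge(\relV{R}{\xid}{\version(u_i)},\relCV{R}{\version(u_i)}) = \relVE{R}{\xid}{\version(u_i)}$ for the first part (the paper proves it via an explicit five-case analysis on how a tuple identifier can be distributed between the two inputs, which is exactly your ``tuple identifier by tuple identifier'' comparison), and the same write-lock invariant plus induction over $u_1,\ldots,u_n$ for the second part, tracking which branch of $\ract_{opt}(u_i)$ each summand traverses. The only cosmetic differences are that the paper proves $\xid \equiv_{\ppSRV} \ract_{opt}(\xid)$ directly rather than via $\ract(\xid)$, and that it packages the write-lock argument as a separate preparatory lemma (every immediate predecessor of a tuple version created by $\xid$ survives in $\relCV{R}{\finish{\xid}-1}$) and then phrases the inductive invariant as three explicit correctness conditions rather than as agreement with $\relVE{R}{\xid}{\version(u_i)}$.
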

\begin{proof}
The proof is shown in Appendix~\ref{sec:proofs}.
\end{proof}
To create a reenactment query for a (partial) history, we combine the results of reenactment queries for all transactions in  the history using the version merge operator. Each reference to a committed version of a relation $\relCV{\xid}{\version}$ is replaced with a multiway merge of the results of reenactment queries for transactions $\xid \in \history$ that committed before $\version$ in the order of commit. For example, if two transactions $\xid_1$ and $\xid_2$ have committed before $\version$ then $\relCV{\rel}{\version}$ is computed as
$$q = \vMerge(\ract(\xid_1),\ract(\xid_2))$$
Later versions can then be computed by reusing this query result, e.g., if the next transaction to commit in the history was $\xid_3$, then the version of $R$ at $\finish{\xid_3} + 1$ is computed as $\vMerge(q, \ract(\xid_3))$.

\section{Implementation}
\label{sec:implementation}

GProM is a middleware that implements reenactment for SI over standard DBMS using a relational encoding of MV-relations~\cite{AG16,AG14}. Reenactment is implemented as SQL queries over this encoding. We have extended the system to implement RC-SI reenactment using the same relational encoding. One advantage of this system is that provenance requests are
considered as queries and can be used as subqueries in an SQL statement, e.g., to  query or store provenance. In Section~\ref{sec:experiments} we study the performance of queries over provenance.
GProM assumes that the underlying database system on which we want to execute
provenance computations keeps an audit log that can be queried and provides at least the information as shown in Figure~\ref{fig:Transactions Example}. Furthermore, the DBMS has to support time travel for the system to query past states of relations (this is used to reenact single transactions and partial histories). For instance, Oracle, DB2, and MSSQL support both features.
While a full description of the implementation and additional optimizations is beyond the scope of this paper, we give a brief overview of the additional optimizations that we have implemented: 1) as we observed in~\cite{AG16}, reenactment queries can contain a large number of union operations that may lead to bad performance if they are unfolded by the DBMS. We extend our approach for using \lstinline!CASE! to avoid union operations~\cite{AG16} to RC-SI; 2) if the user is only interested in the provenance of tuples modified by a particular transaction, then this can be supported by filtering tuples from the output of the transaction's reenactment query that were not affected by the transaction. We did present two methods for improving the efficiency of this filter step by either removing tuples from the input of the reenactment query which do not fulfill the condition of any update of the transaction or by retrieving updated tuple versions from  the database version  after transaction commit and using this set to filter the input using a join. We have adapted both methods for RC-SI; 3) the version merge operator is implemented using aggregation to determine the latest version of each tuple. 

\begin{figure*}[t]
  \begin{minipage}[b]{0.245\linewidth}
    \includegraphics[width=1\linewidth,trim=0 50pt 0 80pt, clip]{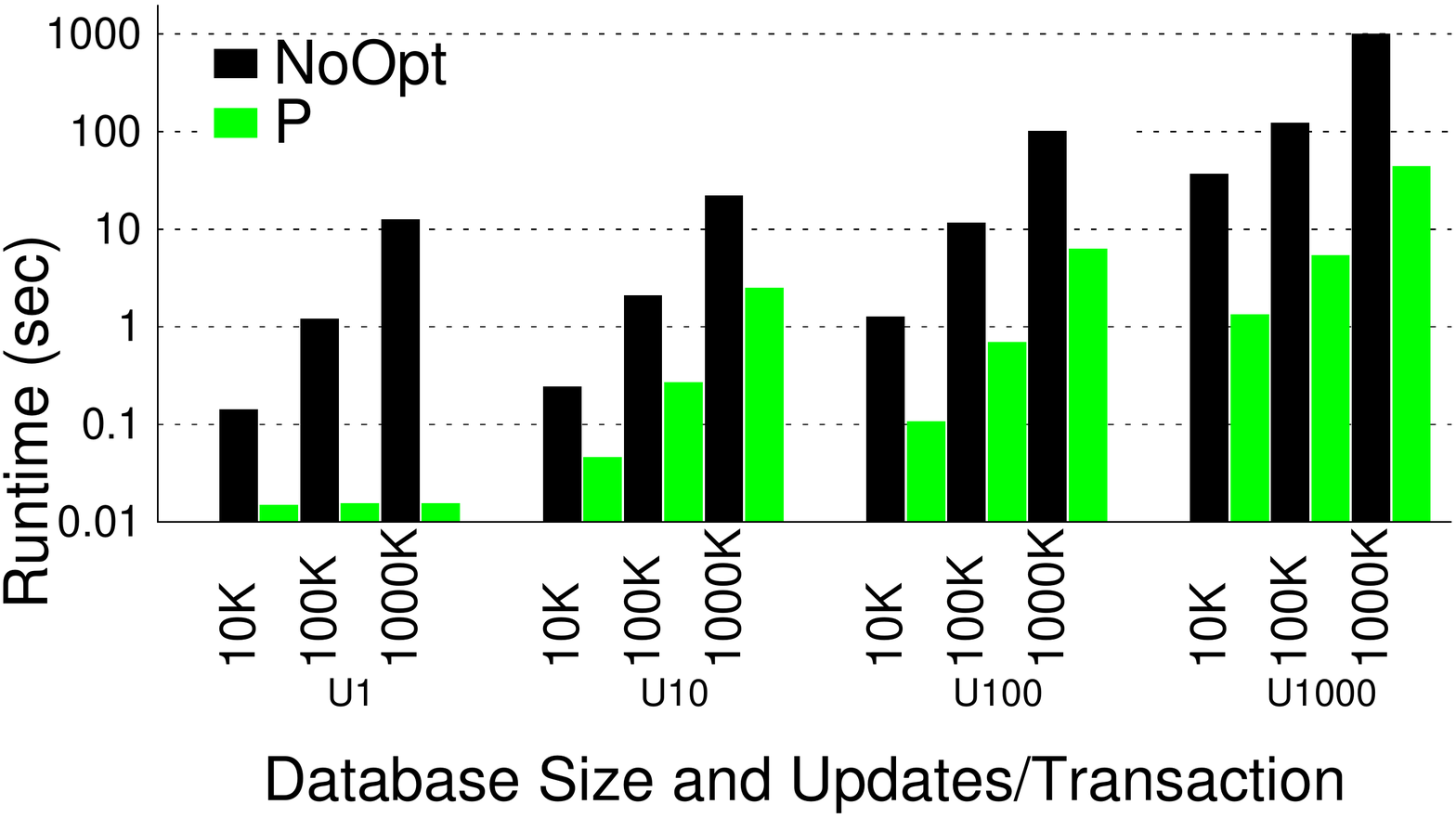}\\
    \vspace{-6.5mm} 
  \caption{Relation Size}
  \label{fig:No-Significant-History}  
  \end{minipage}
  \begin{minipage}[b]{0.245\linewidth}
  \includegraphics[width=1\linewidth,trim=0 70pt 0 80pt, clip]{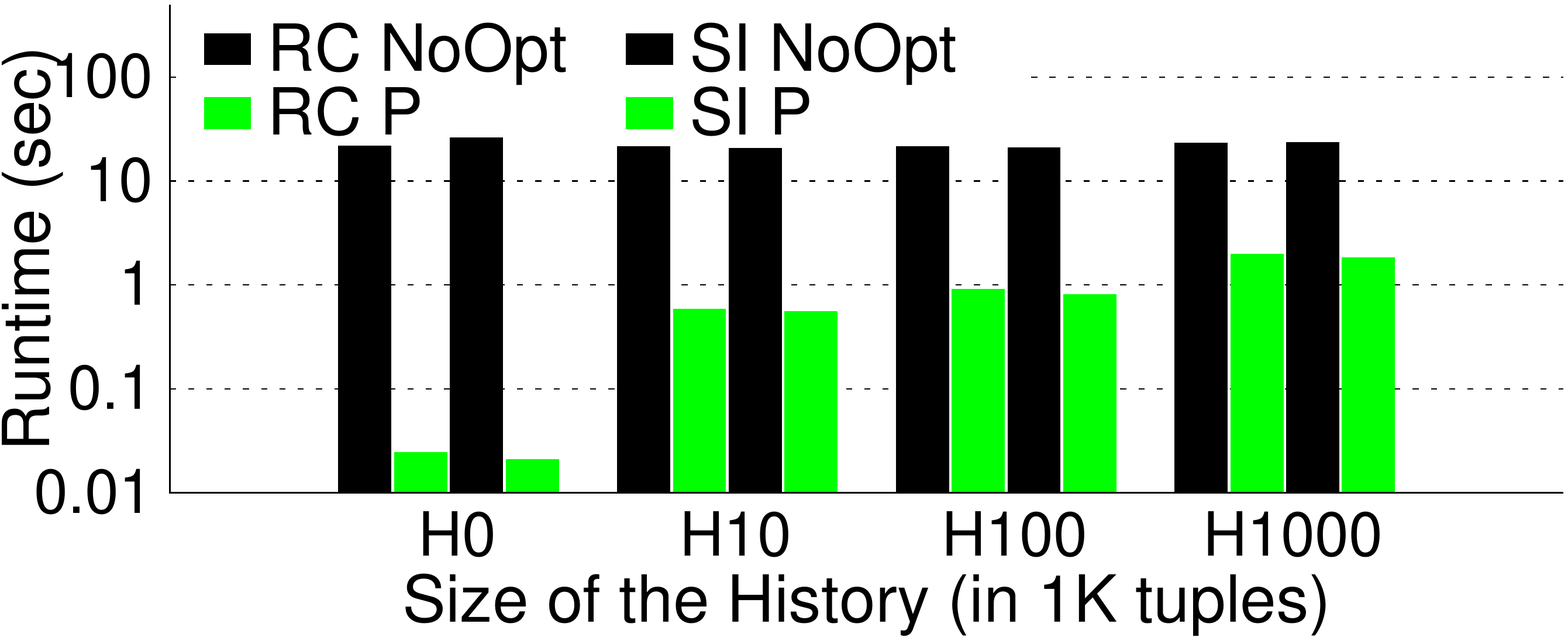}\\
  \vspace{-4mm}
  \caption{History Size}
  \label{fig:Compare-History-Size}  
  \end{minipage}
  \begin{minipage}[b]{0.245\linewidth}
  \includegraphics[width=1\linewidth,trim=0 70pt 0 80pt, clip]{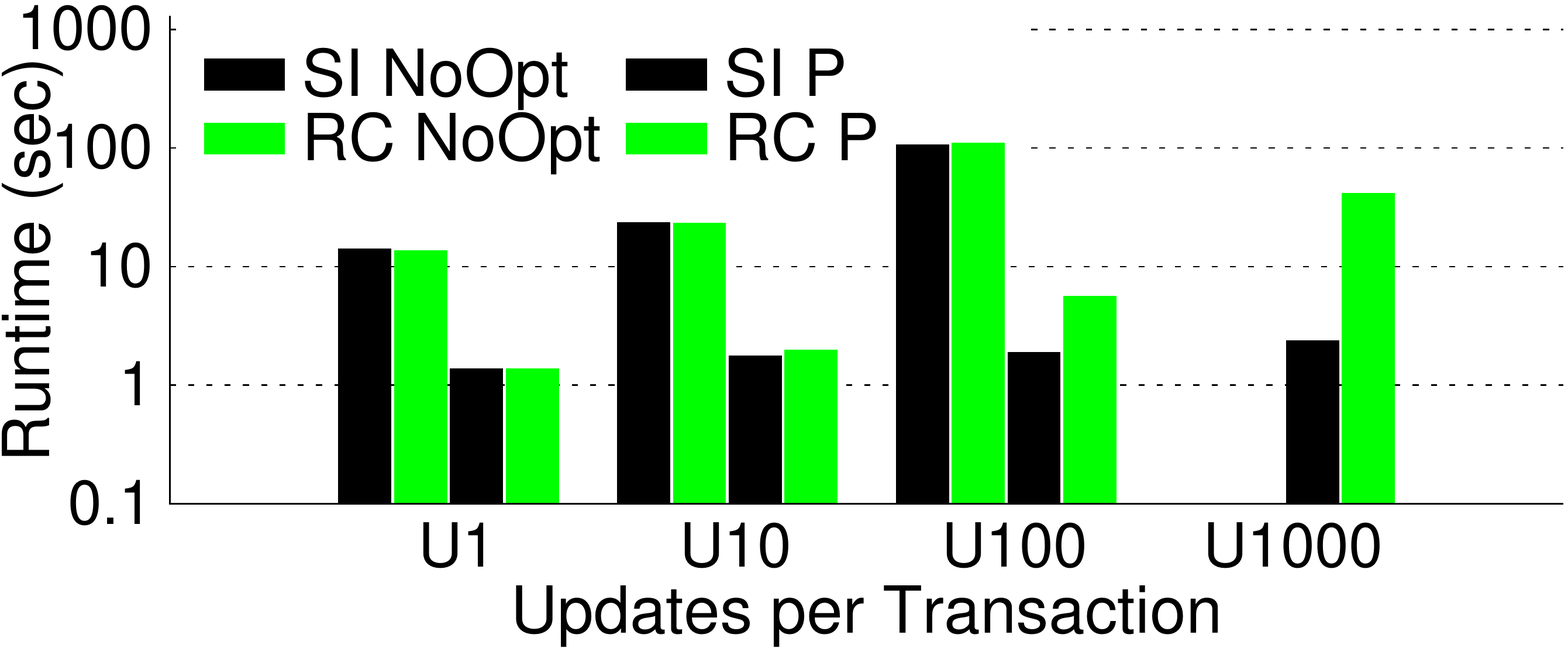}\\
  \vspace{-4mm}
  \caption{Isolation Levels}
  \label{fig:Different-Isolation-Level}
  \end{minipage}
  \begin{minipage}[b]{0.245\linewidth}
  \includegraphics[width=1\linewidth,trim=0 70pt 0 80pt, clip]{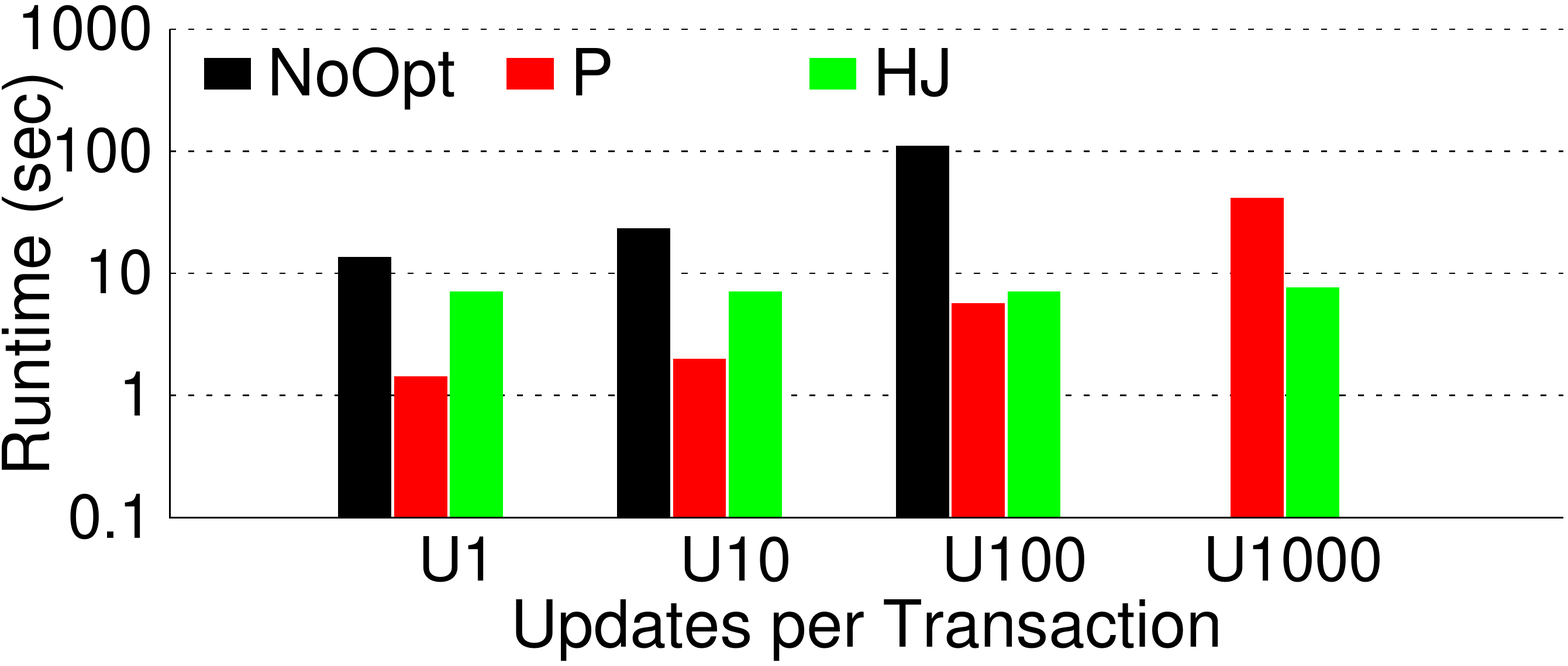}\\
  \vspace{-4mm}
  \caption{Optimization}
  \label{fig:Different-Optimization-Methods}
  \end{minipage}
  \begin{minipage}[b]{0.245\linewidth}
\includegraphics[width=1\linewidth,trim=0 70pt 0 150pt, clip]{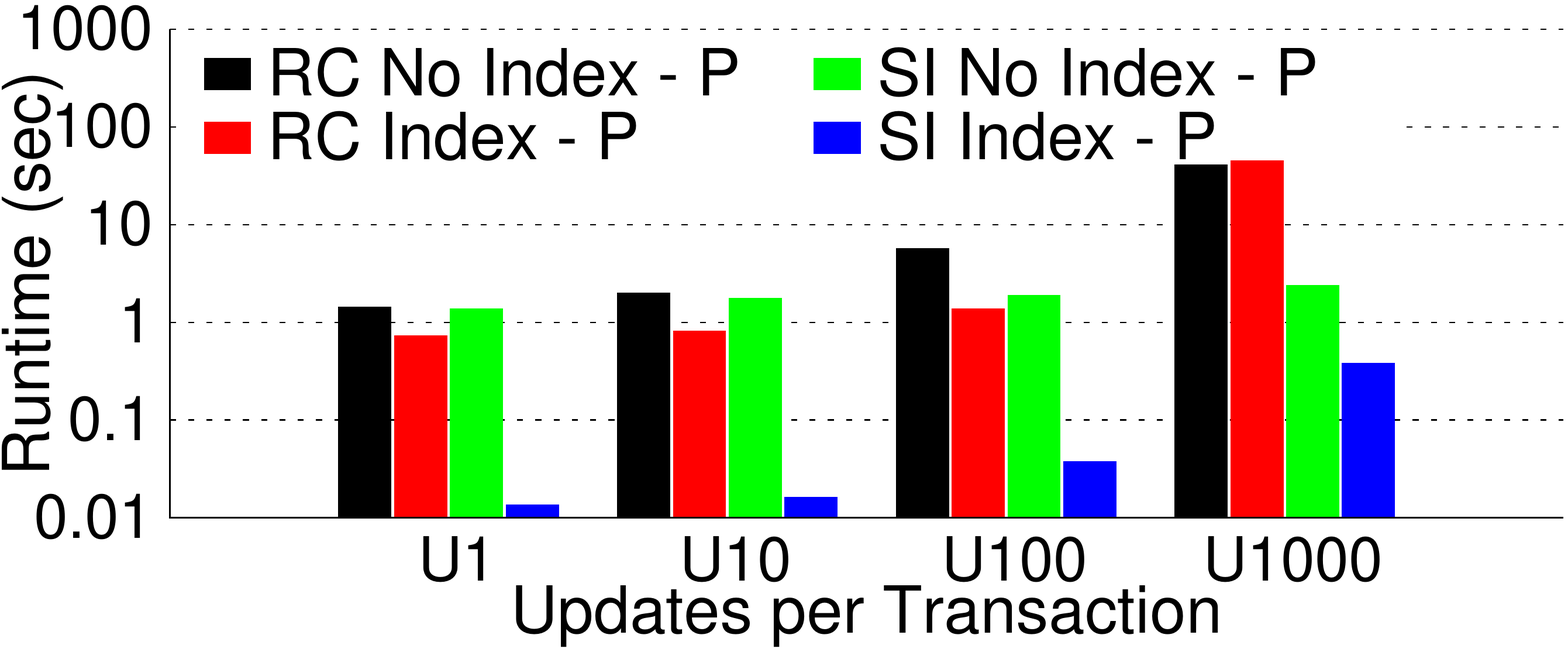}\\
  \vspace{-6.5mm}
  \caption{Index vs. No Index}
  \label{fig:Index-and-Isolation-Levels}
  \end{minipage}
  \begin{minipage}[b]{0.245\linewidth}
  \includegraphics[width=1\linewidth,trim=0 70pt 0 150pt, clip]{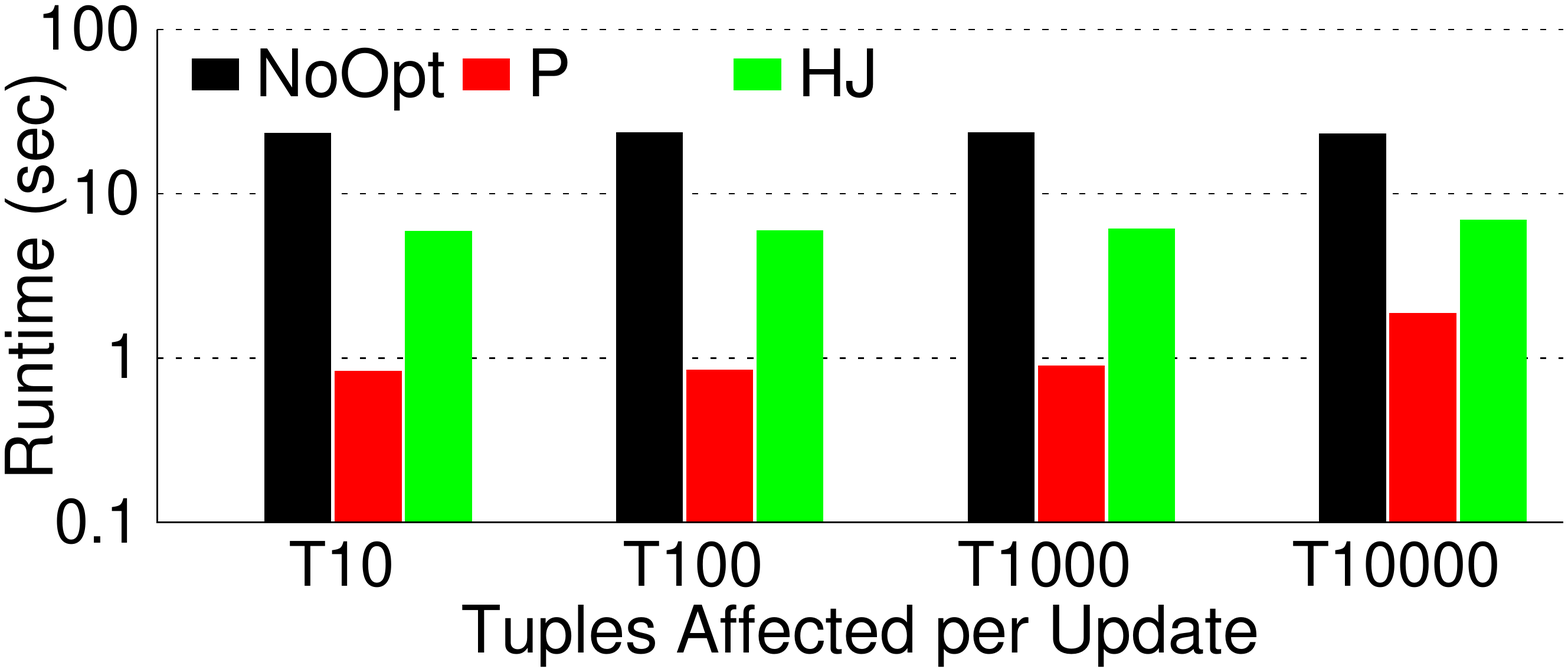}\\
  \vspace{-6.5mm}
  \caption{Affected Tuples}
  \label{fig:Number-of-Updated-Tuples}  
  \end{minipage}
  \begin{minipage}[b]{0.245\linewidth}
    \includegraphics[width=1\linewidth,trim=0 70pt 0 150pt, clip]{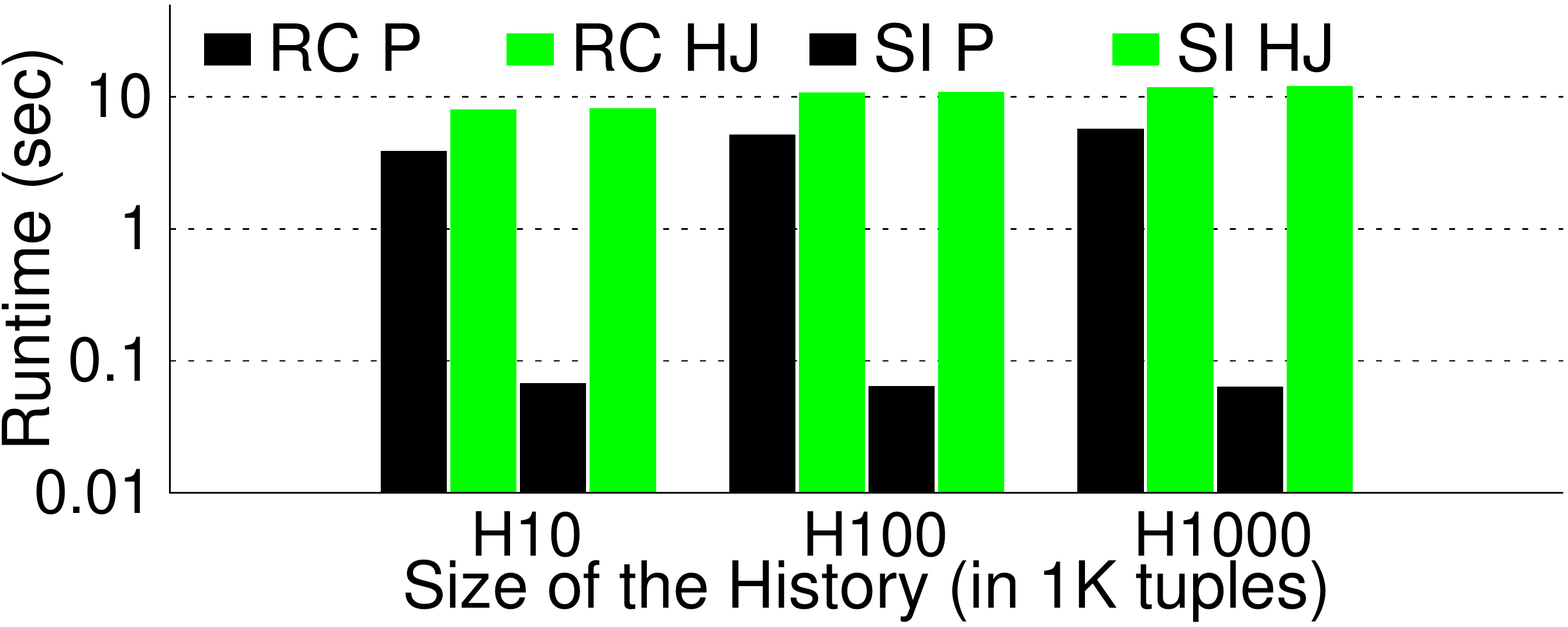}\\
  \vspace{-6.5mm}
  \caption{Inserts and Deletes}
  \label{fig:insert-delete-update}
  \end{minipage}
  \begin{minipage}[b]{0.245\linewidth}
  \includegraphics[width=1\linewidth,trim=0 70pt 0 150pt, clip]{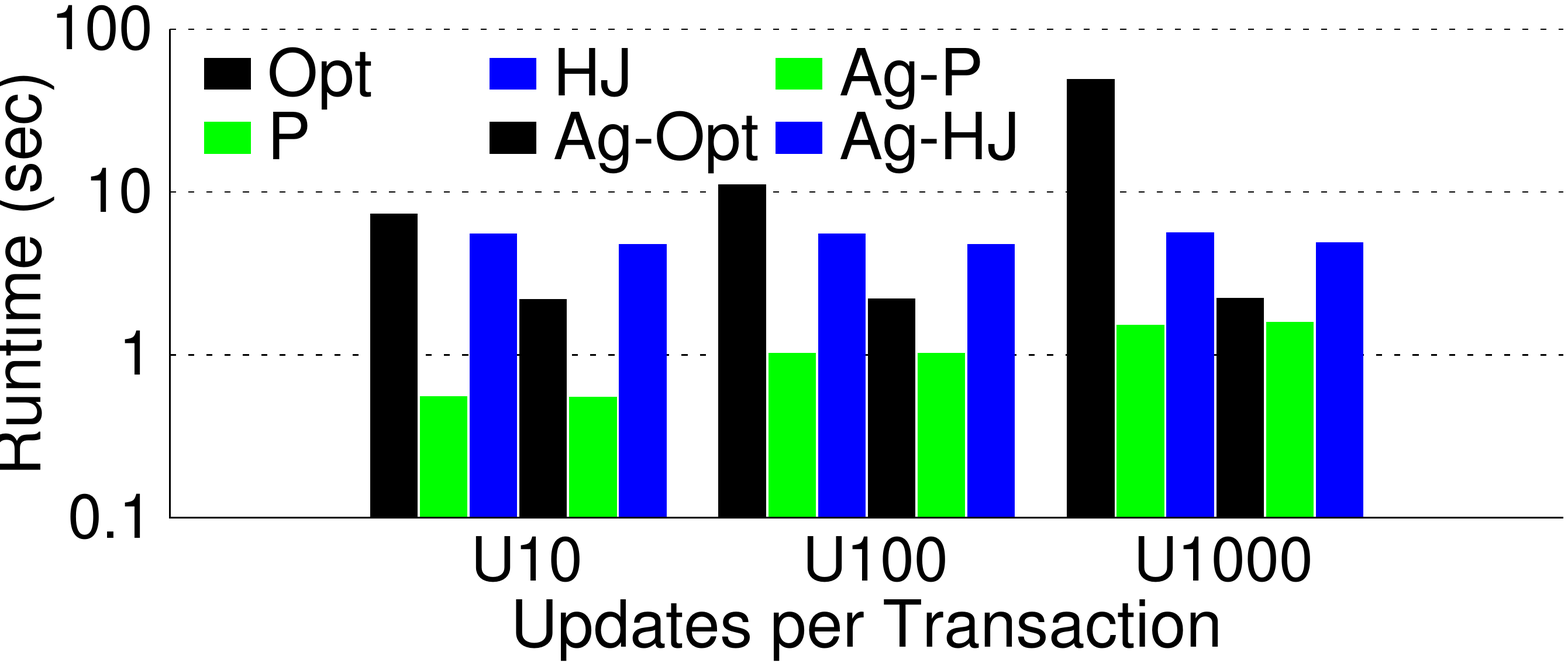}\\
  \vspace{-6.5mm}
  \caption{Aggregation}
  \label{fig:Agg-Prov-Info}
  \end{minipage}
\end{figure*}


\begin{figure}[t]
$,$\\[-4mm]
\begin{minipage}{1.0\linewidth}
  \begin{minipage}[b]{0.495\linewidth}
  \includegraphics[width=1\linewidth,trim=0 70pt 0 155pt, clip]{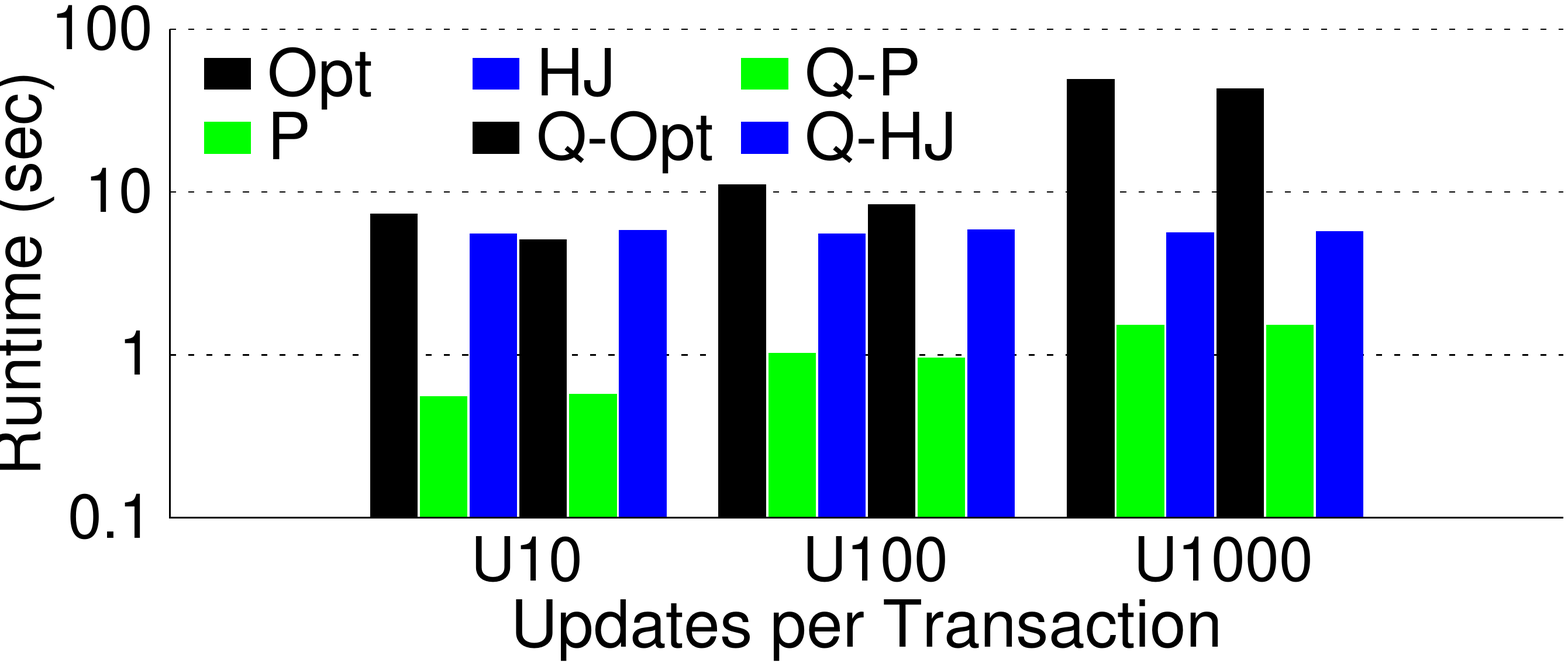}\\
  \vspace{-6.5mm}
  \caption{Query Provenance}
  \label{fig:Que-Prov-Comp}
  \end{minipage}
  \begin{minipage}[b]{0.495\linewidth}
\includegraphics[width=1\linewidth,trim=0 70pt 0 155pt, clip]{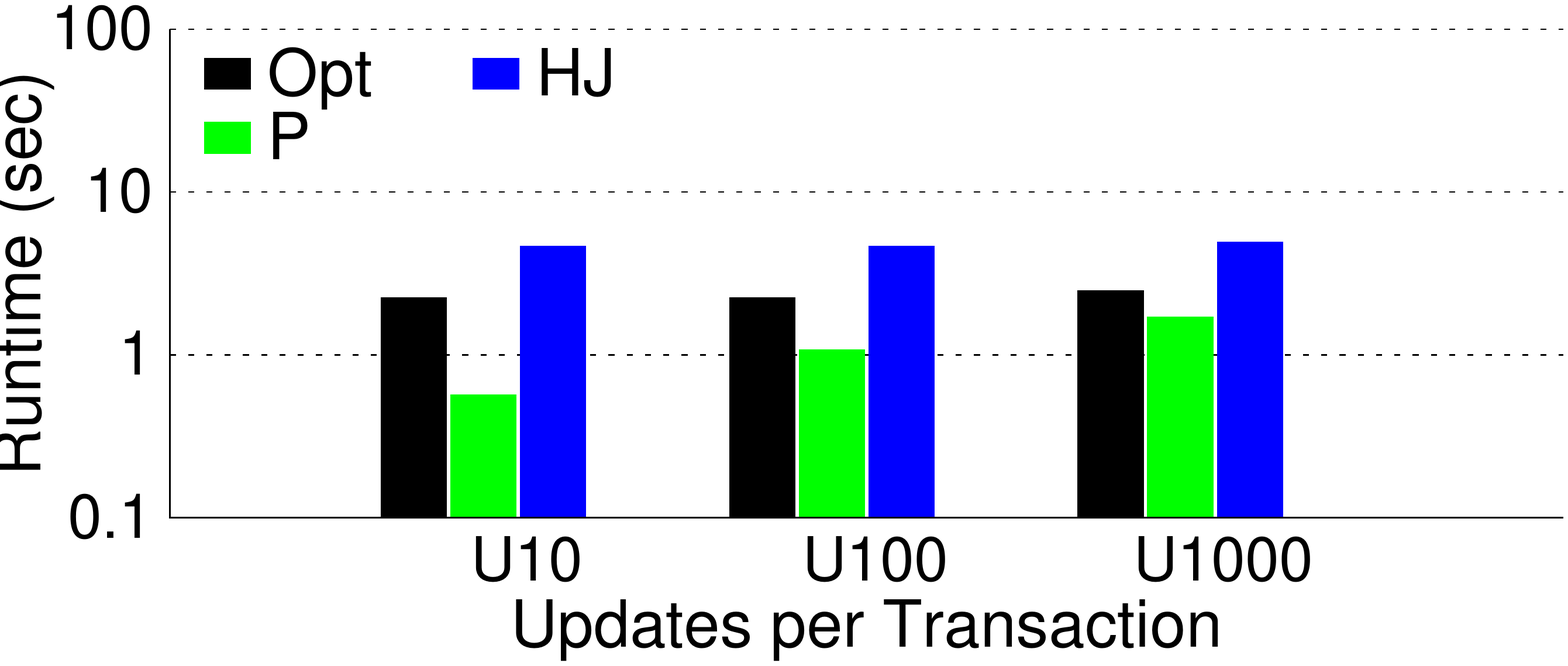}\\
  \vspace{-6.5mm}
  \caption{Query Vers. Ann.}
  \label{fig:Que-St-Anno}
  \end{minipage}
\end{minipage}
\end{figure}


\section{Experiments}
\label{sec:experiments}

Using commercial DBMS X, we evaluate 1) the performance of provenance computation using reenactment for isolation level \textit{RC-SI} and comparing it with \textit{SI}, and  2) the performance of querying provenance. 
All experiments were run on a machine with 2 x AMD Opteron 4238 CPUs (12 cores total), 128 GB RAM, and 4 x 1TB 7.2K
HDs in a hardware RAID 5 configuration. We have studied the runtime and storage overhead of DBMS X's build-in temporal and audit features in~\cite{AG16}. The results demonstrated that the runtime overhead for transaction execution is below 20\%  when audit logging and time travel are activated and it is more efficient than eager materialization of provenance during transaction execution (about 133\% overhead and higher). We did confirm the same trend for RC-SI and, thus, do not present these results here.



\noindent\textbf{Datasets and Workload}. In all experiments, we use a relation with five numeric
columns. Values for these attributes were generated randomly using a uniform distribution. 
Different variants $R10K$, $R100K$, and $R1000K$ with 10K, 100K, and 1M tuples and no significant history ($H0$) were created. Moreover, three variants  of $R1000K$ with different history sizes $H10$, $H100$, and $H1000$ (100K, 1M, and 10M tuples of history) are used. 
%
%
In most experiments, transactions consist only of update statements. The tuple
to be updated is chosen randomly by its primary key. 
The following parameters are used in experiments: $U$ is the number of updates per
transaction (e.g., \textit{U100} is a transaction with 100 updates). $T$ is the number of tuples affected by each update (default is $T1$). 
%
%
%
Transactions were executed under 
isolation level  \textit{RC-SI} (default) or
\textit{SI}.  
Experiments were repeated 100 times and the average runtime is reported.

\noindent\textbf{Compared Methods}. We apply different configurations for computing provenance of transactions using a subset of the optimizations outlined
in Section~\ref{sec:implementation}.
\textbf{NoOpt (N)}: Computes the provenance of all tuples in a relation including tuples that were not affected by the transaction. \textbf{Opt (O)}: Like the previous option but GProM's heuristic relational algebra optimizations are activated. 
\textbf{Prefilter (P)}: Only returns provenance of tuples affected by the transaction by prefiltering (Section~\ref{sec:implementation}).  
\textbf{HistJoin (HJ)}: Same as \textit{P}, but using the join method as described in Section~\ref{sec:implementation}. 
%

\partitle{Provenance Computation}
\label{sec:perf-prov-comp}
For the following experiments we have executed the transactional workload beforehand and measure performance of provenance capture. 

\noindent\textbf{Relation Size and Updates/Transaction}.
\label{sec:exp-vary-size-U}
We consider relations of different size ($R10K$, $R100K$, and $R1000K$) that do not have any significant history ($H0$).
Figure~\ref{fig:No-Significant-History} shows performance of computing provenance of transactions with different number of updates (\textit{U1} up to $U1000$). We applied \textit{N} and \textit{P}. We scale linearly in $R$ and $U$. By reducing the amount of data to be processed, 
the \textit{P} approach is orders of magnitude faster than the 
\textit{N} configuration. 

\noindent\textbf{History Size}.
\label{sec:exp-vary-hist}
%
Figure~\ref{fig:Compare-History-Size} shows the results for relations with 1M tuples ($R1000K$) and varying history sizes ($H0$, $H10$, $H100$, and $H1000$). We compute provenance of transactions with $10$ updates (\textit{U10}).  
Method \textit{N}  has almost constant performance for both isolation levels \textit{RC-SI} and \textit{SI}. 
The \textit{P} approach displays  better performance as it has to process less tuples. 
Its performance decreases for relations with a large history size.


\noindent\textbf{Isolation Levels}.
\label{sec:exp-vary-iso}
Figure~\ref{fig:Different-Isolation-Level} compares the result of transactions  under isolation levels \textit{SI} and \textit{RC-SI} with varying number of updates per transaction (\textit{U1} to \textit{U1000}). This experiment was conducted over table \textit{R1000K-H1000}. The runtime of \textit{N} is not
affected by the choice of isolation level, because the main difference between \textit{SI}
and \textit{RC-SI} reenactment is that 
we need to check whether
a row version is visibile for each update. 
However, the impact of these checks 
is
negligible for \textit{N} as the major cost factors are scanning the table and large
parts of its history as well as producing 1M output rows.
For the more efficient \textit{P} configuration this
effect is more noticeable, especially for larger number of updates per transaction. 
Note that for \textit{U1000} the \textit{N} method did not finish within the
 allocated time budget (1000 seconds). 


\noindent\textbf{Comparing Optimization Techniques}.
\label{sec:exp-comp-opti}
Figure~\ref{fig:Different-Optimization-Methods} compares different optimization methods (\textit{N}, \textit{P}, and \textit{HJ}) for varying number of updates (\textit{U1}, \textit{U10}, \textit{U100}, and \textit{U1000}) using \textit{R1000K-H1000}.  
Both \textit{P} and \textit{HJ} outperform \textit{N} with a more pronounced effect for larger number of updates per transaction.
\textit{P} outperforms \textit{HJ} for \textit{U1} by a factor of 5 whereas this result is reversed for \textit{U1000}. 
The runtime of \textit{HJ} is almost not
affected by parameter $U$, because it is dominated by the temporal join. 


\noindent\textbf{Index vs. No Index}.
\label{sec:exp-index}
We have studied the effect of using indexes for the relation storing the history of a relation. We use \textit{R1000K-H1000} and  vary $U$ (\textit{U1} to \textit{U1000}). 
Figure~\ref{fig:Index-and-Isolation-Levels} compares the effect of indexes for isolation levels \textit{RC-SI} and \textit{SI} using \textit{P}. The results demonstrate that using indexes improves execution time of queries that apply \textit{P} considerably. Provenance computation for \textit{SI} benefits more from indexes, because the prefilter conditions applied by the \textit{P} method are simpler for \textit{SI}.



\noindent\textbf{Affected Tuples Per Update}.
\label{sec:exp-tuples-per-update}
%
We now fix \textit{U10} and \textit{R1000K-H1000}, and vary the number of tuples  (\textit{T}) affected by each update from 10 to 10,000.
%
The runtime (Figure~\ref{fig:Number-of-Updated-Tuples}) 
 is dominated by scanning the
history and filtering out updated tuples (\textit{P}) or
the self-join between historic relations
(\textit{HJ}).
 Increasing the $T$ parameter by 3 orders of
magnitude increases runtime  by about 120\%
(\textit{P}) and 9\% (\textit{HJ}) whereas it does not effect runtime of queries using \textit{N}.



\noindent\textbf{Inserts and Deletes}.
\label{sec:exp-insert-delete}
We now consider transactions that use inserts, deletes, and updates over $R1000K$ varying history size ($H10$ to $H1000$). Each statement in a transaction is chosen randomly with equal probability to be an insert, update, or delete.
Figure~\ref{fig:insert-delete-update} presents the result for $U20$. Performance is comparable to performance for updates for \textit{RC-SI}. This aligns with our previous findings for \textit{SI}.

\partitle{Querying Provenance}
In GProM, provenance computations can be used as subqueries of a more complex SQL query. 
We now measure performance of querying provenance (the runtimes include the runtime of the subquery computing provenance). All experiments of this section are run over relation $R1000K-H0$ and transactions with \textit{U10} to \textit{U1000}.

\noindent\textbf{Aggregation of Provenance Information}.
\label{sec:Ag-Prov-Info}
Figure~\ref{fig:Agg-Prov-Info} shows the results for running an aggregation over the provenance computation 
(denoted as \textit{Ag-}). 
These results indicate that the performance of aggregation on provenance information is comparable to provenance computation. Even more, aggregation considerably improves performance for 
 (\textit{O}). For \textit{U1000}, \textit{Ag-O} results in 95\% improvement over \textit{O} (because it reduces the size of the output) while \textit{Ag-HJ} improves performance by $\sim$ 13\% compared to \textit{HJ}. 

\noindent\textbf{Filtering Provenance}.
\label{sec:Que-Prov-Comp}
A user may only be interested in part of the provenance that fulfills certain selection conditions, 
e.g., bonuses larger than a certain amount.
Figure~\ref{fig:Que-Prov-Comp} shows the runtime of provenance computation and querying (denoted as \textit{Q-}). Performance of querying the results of provenance capture is actually slightly better than just computing provenance, because it reduces the size of the output and selection conditions over provenance are pushed into the SQL query implementing the provenance computation. 

\noindent\textbf{Querying Versions Annotations}.
\label{sec:Que-Anno-Prov-Info}
A user can also query version annotations which are shown as boolean attributes in the provenance, e.g., to only return provenance for  tuples that were updated by a certain update of the transaction.
Figure~\ref{fig:Que-St-Anno} shows the performance results for such queries. We fix an update $u \in \xid$ and only return provenance of tuples modified by this update. 
This reduces the runtime of \textit{O} queries significantly by reducing  the size of the output.

\section{Conclusions}
\label{sec:conclusions}

We have presented an efficient solution for computing the provenance of
transactions run under RC-SI by extending our MV-semiring model and reenactment approach.
Our experimental
evaluation demonstrates that our novel optimizations specific to RC-SI enables us to achieve performance comparable to SI reenactment.
In future work, we would like to explore the application of reenactment for post-mortem debugging of transactions which is particularly important for lower isolations level such as RC-SI.







\bibliographystyle{abbrv}
\bibliography{trans}

\clearpage
\appendix

\begin{figure*}[t]
  \centering
 \begin{align*}
         \isMax(R,k) &= \begin{cases}
      0 & \mathtext{if} \exists t',j : \idOf(R(t')[j]) = \idOf(k) \wedge \versionOf(R(t')[j]) > \versionOf(k) \\
      1 & \mathtext{otherwise}
      \end{cases}\\
      \isStrictMax(R,k) &= \begin{cases}
      0 & \mathtext{if} \exists t',j : \idOf(R(t')[j]) = \idOf(k) \wedge \versionOf(R(t')[j]) \geq \versionOf(k) \\
      1 & \mathtext{else}
      \end{cases}\\
         \idOf(\upMarker{X}{\xid}{\version}{\tid}(k')) &=  \tid \\
      \versionOf(\upMarker{X}{\xid}{\version}{\tid}(k')) &= \version 
 \end{align*}  
  \caption{Definitions of $\isMax$, $\isStrictMax$, $idOf$ and $\versionOf$.}
  \label{fig:redef-ismax}
\end{figure*}

\section{Proofs}
\label{sec:proofs}

Before proving Theorem~\ref{theo:update-reenactment-equivalence-RCSI} we establish that lifted homomorphisms commute with RC-SI histories and the two new query operators we have introduced for RC-SI reenactment. Note that~\cite{AG16} established that lifted homomorphisms commute with queries, updates, and SI histories.
Lifted homomorphisms are a special type of $\mvK$-homomorphisms that are derived from a $\semK$-homomorphism $h$ by applying it to all elements $k \in K$ that occur an $\mvK$-element $k^{\version}$. That is, a lifted homomorphism preserves the expression structure of MV-semiring elements.  For example, consider a homomorphism $h: \ppSR \to \mathbb{N}$ defined as
\begin{align*}
  h(k) =
  \begin{cases}
    1 & \mathtext{if} k = x_1\\
    3 & \mathtext{if} k = x_2\\
    0 & \mathtext{otherwise}\\
  \end{cases}
\end{align*}

Applying the corresponding lifted homomorphism $\liftH{h}$ to 

$$\upMarker{C}{\xid}{3}{1}(\upMarker{U}{\xid}{1}{1}(x_1)) + \upMarker{C}{\xid}{3}{2}(\upMarker{U}{\xid}{2}{2}(x_2))$$

 yields

$$\upMarker{C}{\xid}{3}{1}(\upMarker{U}{\xid}{1}{1}(1)) + \upMarker{C}{\xid}{3}{2}(\upMarker{U}{\xid}{2}{2}(3))$$.

\begin{lem}\label{lem:lift-h-commutes-RC-SI-histories}
Let $\semK_1$ and $\semK_2$ be commutative semirings and $h: \semK_1 \to \semK_2$ a semiring homomorphism. Then the lifted homomorphism $\liftH{h}: \mvOf{\semK_1} \to \mvOf{\semK_2}$ as defined in~\cite{AG16} commutes with any RC-SI history $\history$.
\end{lem}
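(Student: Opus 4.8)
The plan is to proceed by structural induction on the recursive definition of the historic relational instances in Figure~\ref{fig:si-hist-db-definitions}, mirroring the argument used in~\cite{AG16} for SI histories. Concretely, I would prove the stronger simultaneous statement that for every relation $\rel$, every transaction $\xid$, and every version $\version$, the lifted homomorphism $\liftH{h}$ commutes with each of the four quantities $\relV{\rel}{\xid}{\version}$, $\relVE{\rel}{\xid}{\version}$, $\relCV{\rel}{\version}$, and (as auxiliary facts) with the validity indicator functions $\validIn$, $\validEx$, $\validAt$, and the predicate $\hasUp$. The induction is on $\version$: each of the four quantities at version $\version$ is defined in terms of quantities at strictly smaller versions (and, in the case of $\relVE{\rel}{\xid}{\version}$ and $\relCV{\rel}{\version}$, in terms of $\relV{\rel}{\xid}{\version}$ at the same $\version$, which is itself defined from smaller versions), so the recursion is well-founded.

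The key steps, in order, are: (1) The base case: $\relCV{\rel}{\version}$ and $\relV{\rel}{\xid}{\version}$ are both $\nullV$ for $\version$ below the relevant start times, and $\liftH{h}(\nullV) = \nullV$ since lifted homomorphisms are $\mvK$-homomorphisms. (2) The indicator functions: since $\liftH{h}$ preserves the expression structure of MV-semiring elements (it only rewrites the embedded $k \in K$ leaves), it maps a summand $k = \upMarker{X}{\xid}{\version'}{\tid}(k')$ to $\upMarker{X}{\xid}{\version'}{\tid}(\liftH{h}(k'))$ — the outermost version annotation, its transaction, version, id, and type $X$ are all unchanged. Hence $\validIn$, $\validAt$, and the structural part of $\hasUp$ (which only inspect version annotations, transaction ids, and versions, never the embedded semiring values) are invariant under applying $\liftH{h}$ to their arguments; this gives $\liftH{h}(k \times \validIn(\ldots)) = \liftH{h}(k) \times \validIn(\ldots)$, and similarly for the others, using that $\validIn(\ldots) \in \{0_\semK, 1_\semK\}$ and $\liftH{h}$ commutes with $+$ and $\times$. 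One must also check $\hasUp$ carefully: it quantifies over $\nthOfK{\relV{\rel}{\xid}{\version(u)}(t)}{i} = k$, so its invariance relies on the inductive hypothesis that $\liftH{h}$ commutes with $\relV{\rel}{\xid}{\version}$ at smaller versions, plus the fact that applying a homomorphism cannot merge or split summands in the normal form in a way that changes which version annotations appear. (3) The inductive step for the three relation quantities: for $\relV{\rel}{\xid}{\version}$, go through the five cases of Figure~\ref{fig:si-hist-db-definitions-a}; the interesting cases are the update case $u(\relVE{\rel}{\xid}{\version-1})$ — which follows because~\cite{AG16} already established that $\liftH{h}$ commutes with update operations, combined with the inductive hypothesis for $\relVE{}$ — and the commit case, handled analogously using commutativity of $\liftH{h}$ with $\commitOp{\xid}{\version-1}{\cdot}$. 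For $\relVE{\rel}{\xid}{\version}$ and $\relCV{\rel}{\version}$, both are finite sums of terms of the form $\nthOfK{\cdot}{i} \times (\text{indicator})$, so commuting $\liftH{h}$ through them reduces to step (2) together with commutativity of $\liftH{h}$ with $+$, $\times$, and — crucially — with the normalization / summand-indexing operator $\nthOfK{\cdot}{i}$.

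The main obstacle I expect is precisely this last point: making rigorous the claim that $\liftH{h}$ "commutes with taking the $i$-th summand of the normal form." The normal form $\sum_{i=0}^{\numInSum{k}} \nthOfK{k}{i}$ is only non-unique, and applying a homomorphism $h$ to the embedded values can collapse distinct summands (e.g., if $h$ sends two different variables to the same value, or to $0_\semK$), so $\numInSum{\liftH{h}(k)}$ need not equal $\numInSum{k}$, and the index $i$ need not be preserved. The correct way around this is to observe that the definitions in Figure~\ref{fig:si-hist-db-definitions} are invariant under the semiring laws of Figure~\ref{fig:congr-for-mvK} (they are defined on congruence classes), so one should phrase the whole argument at the level of congruence classes: $\sum_i \nthOfK{k}{i} \times f(\nthOfK{k}{i})$ is well-defined on $\congr{k}$ whenever $f$ depends only on the version-annotation structure, and $\liftH{h}$ respects congruence. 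With that framing, the commutation $\liftH{h}\!\left(\sum_i \nthOfK{k}{i} \times f(\nthOfK{k}{i})\right) = \sum_i \nthOfK{\liftH{h}(k)}{i} \times f(\nthOfK{\liftH{h}(k)}{i})$ holds because both sides equal the image of the same congruence class, since $f$ is constant on the "fibers" that $\liftH{h}$ might merge. I would state this as a small sublemma up front and then the rest of the induction is bookkeeping. I would also note that this lemma is the RC-SI analogue of a fact already proved for SI in~\cite{AG16}, so much of the infrastructure (commutation with queries, updates, commits) can be cited rather than re-derived; the genuinely new content is the commutation with $\relVE{}$, $\relCV{}$, and the RC-SI-specific validity predicates, plus the later lemma that $\liftH{h}$ commutes with $\vMerge$ and $\vFilt{\theta}$ — which again reduces to the observation that $\idOf$ and $\versionOf$ read only the outermost version annotation and are therefore unaffected by $\liftH{h}$.
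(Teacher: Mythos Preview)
Your proposal is correct and follows essentially the same strategy as the paper: cite~\cite{AG16} for the shared infrastructure and push $\liftH{h}$ through the remaining RC-SI-specific definitions using that the validity predicates depend only on the version-annotation structure, which lifted homomorphisms preserve. The paper is slightly more aggressive in what it delegates to~\cite{AG16} than you anticipate---it observes that $\relCV{\rel}{\version}$, $\hasUp$, and $\validAt$ are defined identically to the SI case and are therefore already covered by Theorem~5.5 of~\cite{AG16}, so the only genuinely new work is pushing $\liftH{h}$ through $\relVE{\rel}{\xid}{\version}$ and the predicates $\validIn$ and $\validEx$; your careful discussion of the summand-indexing subtlety is more rigorous than what the paper spells out, and your mention of $\vMerge$ and $\vFilt{\theta}$ properly belongs to the subsequent Lemmas~\ref{lem:lifted-commutes-version-merge} and~\ref{lem:lifted-commutes-version-filter} rather than to this one.
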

\begin{proof}
As mentioned above and proven in~\cite{AG16}, $\liftH{h}$ commutes with queries, updates, and SI histories. In the definition of $\relV{R}{\xid}{\version}$, the committed relation version $\relCV{R}{\version}$ is defined analog to SI histories. The same is true for predicate $\hasUp(T,$ $t,k,\version)$ and $\validAt(T,t,k,\version)$. Based on Theorem 5.5 of ~\cite{AG16} any lifted homomorphism commutes with $\hasUp(T,t,k,\version)$ and $\validAt(T,t,k,\version)$ as well as with the operations used in the definition of $\relCV{R}{\version}$. Since these results do not depend on the admissibility of the input relation (which is based on the concurrency control protocol and thus different for SI and RC-SI), it only remains to show that the lifted homomorphism $\liftH{h}$ commutes with the operations of $\relVE{\rel}{\xid}{\version}(t)$, i.e., it can be pushed into the committed relation version accessed by $\relVE{\rel}{\xid}{\version}(t)$. We have

\begin{align*}
&\liftH{h}(\relVE{\rel}{\xid}{\version})(t) \\
=&  \liftH{h}(\sum_{i=0}^{\numInSum{\relCV{\rel}{\version}(t)}} \nthOfK{\relCV{\rel}{\version}(t)}{i} \times \validEx (\xid,t, \nthOfK{\relCV{\rel}{\version}(t)}{i}, \version)\\
&+ 
\sum_{i=0}^{\numInSum{\relV{\rel}{\xid}{\version-1}(t)}} \nthOfK{\relV{\rel}{\xid}{\version - 1}(t)}{i}\\ &\mathtab\mathtab\mathtab\mathtab\mathtab\mathtab\times \validIn (\xid,t, \nthOfK{\relV{\rel}{\xid}{\version - 1}(t)}{i}, \version - 1))
\end{align*}
Any homomorphism $\liftH{h}$ commutes with addition. Thus,
\begin{align*}
=&  \sum_{i=0}^{\numInSum{\liftH{h}(\relCV{\rel}{\version}(t))}} \liftH{h}(\nthOfK{\relCV{\rel}{\version}(t)}{i} \times \validEx (\xid,t, \nthOfK{\relCV{\rel}{\version}(t)}{i}, \version))\\
&+ 
\sum_{i=0}^{\numInSum{\liftH{h}(\relV{\rel}{\xid}{\version-1}(t))}} \liftH{h}(\nthOfK{\relV{\rel}{\xid}{\version - 1}(t)}{i}\\
& \mathtab\mathtab\mathtab\mathtab\mathtab\mathtab \times \validIn (\xid,t, \nthOfK{\relV{\rel}{\xid}{\version - 1}(t)}{i}, \version - 1)))
\end{align*}

and since $\liftH{h}$ also commutes with multiplication, we have

\begin{align*}
=&  \sum_{i=0}^{\numInSum{\liftH{h}(\relCV{\rel}{\version}(t)})} \liftH{h}(\nthOfK{\relCV{\rel}{\version}(t)}{i}) \times \liftH{h}(\validEx (\xid,t, \nthOfK{\relCV{\rel}{\version}(t)}{i}, \version))\\
&+ 
\sum_{i=0}^{\numInSum{\liftH{h}(\relV{\rel}{\xid}{\version-1}(t)})} \liftH{h}(\nthOfK{\relV{\rel}{\xid}{\version - 1}(t)}{i}) \\
& \mathtab\mathtab\mathtab\mathtab\mathtab\mathtab\times \liftH{h}(\validIn (\xid,t, \nthOfK{\relV{\rel}{\xid}{\version - 1}(t)}{i}, \version - 1))
\end{align*}

Given that $\liftH{h}(\hasUp(T,t,k,\version)) = \hasUp(T,t,\liftH{h}(k),\version)$, it follows that 

\begin{align*}
&\liftH{h}(\validEx (\xid,t, \nthOfK{\relCV{\rel}{\version}(t)}{i}, \version))\\
= &\validEx (\xid,t, \nthOfK{\liftH{h}(\relCV{\rel}{\version})(t)}{i}, \version)
\end{align*}

Furthermore, the condition $\validIn$ is based only on the outermost version annotation in a summand $k$. Since lifted homomorphisms by design do not manipulate version annotations it follows that:

\begin{align*}
&\liftH{h}(\validIn (\xid,t, \nthOfK{\relV{\rel}{\xid}{\version - 1}(t)}{i}, \version - 1))\\
= &\validIn (\xid,t, \nthOfK{\liftH{h}(\relV{\rel}{\xid}{\version - 1})(t)}{i}, \version - 1)
\end{align*}

Thus, we have

\begin{align*}
=&  \sum_{i=0}^{\numInSum{\liftH{h}(\relCV{\rel}{\version})(t)}} \nthOfK{\liftH{h}(\relCV{\rel}{\version})(t)}{i} \times \validEx (\xid,t, \nthOfK{\liftH{h}(\relCV{\rel}{\version}(t))}{i}, \version)\\
&+ 
\sum_{i=0}^{\numInSum{\liftH{h}(\relV{\rel}{\xid}{\version-1})(t)}} 
\nthOfK{\liftH{h}(\relV{\rel}{\xid}{\version - 1})(t)}{i}\\
& \mathtab\mathtab\mathtab\mathtab\mathtab\mathtab \times \validIn (\xid,t, \nthOfK{\liftH{h}(\relV{\rel}{\xid}{\version - 1})(t)}{i}, \version - 1)
\end{align*}

This implies that $\liftH{h}$ can be pushed into $\relVE{\rel}{\xid}{\version}$ and given that $\liftH{h}$ commutes with all other operations used to define $\relV{\rel}{\xid}{\version}$ it follows that $\liftH{h}$ commutes with histories. 
\end{proof}

Furthermore, we have introduced two new query operators that are used in reenactment. We now prove that lifted homomorphisms commute with these query operators. This means we only need to prove $\ppSRV$-equivalence of operations with their reenactment queries, because this then automatically implies $\mvK$-equivalence for any naturally ordered semiring $\mvK$. The new query  operators we have introduced are the \textit{version merge} operator $\vMerge(R,S)$ that merges two versions $R$ and $S$ of the same relation by only keeping the newest versions of tuples and the \textit{version filter} operator $\vFilt{\theta}(R)$ which removes summands (tuple versions) which do not fulfill the condition $\theta$ expressed over the versions (pseudo attribute $V$) encoded in the version annotations.

\begin{lem}\label{lem:lifted-commutes-version-merge}
Let $\liftH{h}: \mvOf{\semK_1} \to \mvOf{\semK_2}$ be a lifted homomorphism, then $\liftH{h}$ commutes with $\vMerge(R,S)$ if $R$ and $S$ are normalized admissible $\mvOf{\semK_1}$-relations.
\end{lem}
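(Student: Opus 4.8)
The plan is to show that $\liftH{h}$ commutes with $\vMerge(R,S)$ by unfolding the definition of the version merge operator, pushing $\liftH{h}$ through the two sums and the products, and then arguing that $\liftH{h}$ commutes with the auxiliary functions $\isMax$ and $\isStrictMax$. Concretely, starting from $\liftH{h}(\vMerge(R,S))(t)$, I would expand using the definition
\begin{align*}
  \vMerge(R,S)(t) &= \sum_{i=0}^{\numInSum{R(t)}} R(t)[i] \times \isMax(S,R(t)[i])\\ &\mathtab + \sum_{i=0}^{\numInSum{S(t)}} S(t)[i] \times \isStrictMax(R,S(t)[i]),
\end{align*}
then use that any lifted homomorphism commutes with $+$ and $\times$ (as established in~\cite{AG16}) to move $\liftH{h}$ inside each summand. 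This reduces the problem to showing two things: first, that $\liftH{h}$ maps the $i$\textsuperscript{th} summand of $R(t)$ to the $i$\textsuperscript{th} summand of $\liftH{h}(R)(t)$ (which holds because lifted homomorphisms preserve the expression structure of MV-semiring elements, in particular the top-level sum structure, and hence the normal form and the summand order); and second, that $\liftH{h}(\isMax(S,k)) = \isMax(\liftH{h}(S),\liftH{h}(k))$ and similarly for $\isStrictMax$.

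For the second point I would observe that $\isMax(S,k)$ and $\isStrictMax(S,k)$ return $0$ or $1$ depending only on the tuple identifiers and version numbers encoded in the outermost version annotations of the summands of $S$ and of $k$ — see the definitions of $\idOf$ and $\versionOf$ in Figure~\ref{fig:redef-ismax}. Since lifted homomorphisms by design do not manipulate version annotations (they only rewrite the embedded $\semK$-elements at the leaves), $\idOf$ and $\versionOf$ are invariant under $\liftH{h}$, so the existential condition defining $\isMax$/$\isStrictMax$ evaluates to the same truth value for $S$ and $\liftH{h}(S)$, for $k$ and $\liftH{h}(k)$. Moreover $\liftH{h}(0_{\semK}) = 0_{\semK}$ and $\liftH{h}(1_{\semK}) = 1_{\semK}$, so the $0/1$ output of these functions is preserved. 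Putting this together, $\liftH{h}(\isMax(S,k)) = \isMax(\liftH{h}(S),\liftH{h}(k))$, and likewise for $\isStrictMax$. Substituting back, the expanded form of $\liftH{h}(\vMerge(R,S))(t)$ matches exactly the definition of $\vMerge(\liftH{h}(R),\liftH{h}(S))(t)$.

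The one subtlety I expect to need care is the normal-form bookkeeping: $\numInSum{\cdot}$ and the indexing $[i]$ are defined relative to a normalized representative of an $\mvK$-element, and I must argue that $\liftH{h}$ respects this — i.e. that applying $\liftH{h}$ to a normalized annotation yields a normalized annotation with the same number of summands in the same order, so that the index-wise correspondence used above is legitimate. This follows because $\liftH{h}$ preserves the expression structure (it touches only leaves), so it cannot introduce or remove $+$ operations at the top level; I would state this explicitly, perhaps citing the analogous argument already used for $\relVE{\rel}{\xid}{\version}$ in Lemma~\ref{lem:lift-h-commutes-RC-SI-histories}, and then conclude. The argument is essentially routine given the structural-preservation property of lifted homomorphisms; the main obstacle is simply making the summand-indexing argument precise rather than any genuine mathematical difficulty.

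\begin{proof}
We unfold the definition of $\vMerge$. Since $\liftH{h}$ preserves the expression structure of MV-semiring elements and in particular does not introduce or eliminate top-level $+$ operations, it maps a normalized annotation to a normalized annotation with the same number of summands in the same order; hence $\numInSum{\liftH{h}(R)(t)} = \numInSum{R(t)}$ and $\liftH{h}(R(t))[i] = \liftH{h}(R(t)[i])$, and likewise for $S$. Using that $\liftH{h}$ commutes with $+$ and $\times$ (\cite{AG16}), we obtain
\begin{align*}
&\liftH{h}(\vMerge(R,S))(t)\\
=&\sum_{i=0}^{\numInSum{R(t)}} \liftH{h}(R(t)[i]) \times \liftH{h}(\isMax(S,R(t)[i]))\\
&\mathtab + \sum_{i=0}^{\numInSum{S(t)}} \liftH{h}(S(t)[i]) \times \liftH{h}(\isStrictMax(R,S(t)[i])).
\end{align*}
The functions $\isMax$ and $\isStrictMax$ depend only on the tuple identifiers $\idOf$ and versions $\versionOf$ encoded in outermost version annotations (Figure~\ref{fig:redef-ismax}), which $\liftH{h}$ leaves unchanged; and $\liftH{h}$ maps $0_{\semK}, 1_{\semK}$ to themselves. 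Therefore
\begin{align*}
\liftH{h}(\isMax(S,k)) &= \isMax(\liftH{h}(S),\liftH{h}(k)),\\
\liftH{h}(\isStrictMax(R,k)) &= \isStrictMax(\liftH{h}(R),\liftH{h}(k)).
\end{align*}
Substituting these identities and the summand correspondence above yields exactly $\vMerge(\liftH{h}(R),\liftH{h}(S))(t)$, which establishes the claim.
\end{proof}
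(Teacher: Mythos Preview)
Your overall strategy matches the paper's: unfold $\vMerge$, push $\liftH{h}$ through sums and products, and argue that $\isMax$/$\isStrictMax$ are invariant because $\idOf$ and $\versionOf$ depend only on the outermost version annotation, which lifted homomorphisms do not touch.

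The gap is in your normal-form bookkeeping. You assert that $\liftH{h}$ ``cannot introduce or remove $+$ operations at the top level'' and hence $\numInSum{\liftH{h}(R)(t)} = \numInSum{R(t)}$ with $\liftH{h}(R(t))[i] = \liftH{h}(R(t)[i])$. This is false. If $h$ sends every $\semK_1$-leaf inside a summand to $0_{\semK_2}$, then by the equivalence $\upMark(0_{\semK}) = 0_{\semK}$ (Figure~\ref{fig:congr-for-mvK}) that entire summand collapses to $0$ and disappears from the normalized representative. The paper states this explicitly: ``$\liftH{h}(k^{\version})$ for any normalized MV-semiring element $k^{\version}$ is a subset of the summands of $k^{\version}$.'' So you are confusing syntactic structure-preservation with preservation of the normal form; the former holds, the latter does not.

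The paper handles this by acknowledging that the surviving summands form a subset and restricting the sum accordingly (dropped summands contribute $0$ either way), and then proves the weaker invariance $\isMax(R,k) = \isMax(R,\liftH{h}(k))$, i.e.\ invariance only in the second argument. Your stronger identity $\liftH{h}(\isMax(S,k)) = \isMax(\liftH{h}(S),\liftH{h}(k))$ is in fact what commutativity ultimately requires, but once you accept that $\liftH{h}(S)$ may have strictly fewer summands than $S$, the existential test defining $\isMax$ could in principle flip from $0$ to $1$; you need to say why it does not, rather than sidestepping the issue via an incorrect summand-count claim.
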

\begin{proof}

\begin{align*}
&\liftH{h}(\vMerge(R,S))(t)\\ 
= &\liftH{h}(\sum_{i=0}^{\numInSum{R(t)}} R(t)[i] \times \isMax(S,R(t)[i]) \\ 
                &+\sum_{i=0}^{\numInSum{S(t)}} S(t)[i] \times \isStrictMax(R,S(t)[i]))\\
\end{align*}

Any homomorphism commutes with addition and multiplication. Furthermore, since $\liftH{h}$ preserves the structure of MV-semiring expressions, we know that $\liftH{h}(k^{\version})$ for any normalized MV-semiring element $k^{\version}$ is a subset of the summands of $k^{\version}$. That is every summand in $k^{\version}$ is preserved unless  $h(k) = 0$ for all elements $k \in K$ that occur in the summand, because in this case the summand's expression is equivalent to $0$ resulting in the summand being removed. Thus, as long as we can prove that $\isMax(R,k) = \isMax(R,$ $\liftH{h}(k))$ and $\isStrictMax(R,k) = \isStrictMax(R,\liftH{h}(k))$ it follows that:

\begin{align*}
 = &\sum_{i=0}^{\numInSum{\liftH{h}(R(t))}} \liftH{h}(R(t)[i] \times \isMax(S,R(t)[i])) \\ 
                &+\sum_{i=0}^{\numInSum{\liftH{h}(S(t))}} \liftH{h}(S(t)[i] \times \isStrictMax(R,S(t)[i]))\\
\end{align*}

Consider the definition of $\isMax$, $\isStrictMax$, $idOf$ and $\versionOf$ as shown in Figure~\ref{fig:redef-ismax}.
Note that $\liftH{h}(\idOf(k)) = \idOf(k)$ and $\liftH{h}(\versionOf(k)) = \versionOf(k)$, because by construction of $\liftH{h}$ we have $\liftH{h}(\upMarker{X}{\xid}{\version}{\tid}(k')) = \upMarker{X}{\xid}{\version}{\tid}(\liftH{h}(k'))$ and thus $\liftH{h}(\idOf(\upMarker{X}{\xid}{\version}{\tid}(k'))) = \idOf(\upMarker{X}{\xid}{\version}{\tid}(k'))$ as well as $\liftH{h}(\versionOf(\upMarker{X}{\xid}{\version}{\tid}(k'))) = \versionOf(\upMarker{X}{\xid}{\version}{\tid}(k'))$.
From this immediately follows that $\isMax(R,k) = \isMax(R,\liftH{h}(k))$ and $\isStrictMax(R,k) = \isStrictMax(R,\liftH{h}(k))$ which concludes the proof.
\end{proof}

\begin{lem}\label{lem:lifted-commutes-version-filter}
Let $\liftH{h}: \mvOf{\semK_1} \to \mvOf{\semK_2}$ be a lifted homomorphism, then $\liftH{h}$ commutes with $\vFilt{\theta}(R,S)$ if $R$ and $S$ are normalized admissible $\mvOf{\semK_1}$-relations.
\end{lem}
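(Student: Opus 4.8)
The plan is to mirror the proof of Lemma~\ref{lem:lifted-commutes-version-merge} almost verbatim, since the version filter operator has the same shape as the version merge operator: a sum over the summands of a relation, each summand multiplied by a $\{0_{\semK},1_{\semK}\}$-valued ``coefficient'' that depends only on the outermost version annotation of that summand. First I would unfold $\liftH{h}(\vFilt{\theta}(R))(t)$ using the definition of $\vFilt{\theta}$, namely $\liftH{h}\bigl(\sum_{i=0}^{\numInSum{R(t)}} R(t)[i] \times \theta(R(t)[i])\bigr)$. Because any semiring homomorphism commutes with $+$ and $\times$, I can push $\liftH{h}$ through the outer sum and through each product, turning the expression into $\sum_i \liftH{h}(R(t)[i]) \times \liftH{h}(\theta(R(t)[i]))$. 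As in the preceding lemma, the one subtlety is that $\liftH{h}$ may annihilate some summand $R(t)[i]$ (precisely when $h(k)=0$ for every $k\in K$ occurring in it); such a summand contributes $0_{\semK}$ both before and after applying $\liftH{h}$, so the set of surviving summands on the left coincides with the summands of $\liftH{h}(R(t))$ and the index range becomes $\numInSum{\liftH{h}(R(t))}$ on the right.

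The key remaining step is to establish $\liftH{h}(\theta(k)) = \theta(\liftH{h}(k))$ for every surviving summand $k$. By definition $\theta(k)$ is obtained by substituting the pseudo-attribute $V$ with the version $\version$ taken from the outermost version annotation $\upMarker{X}{\xid}{\version}{i}(k')$ of $k$, and so it evaluates to $1_{\semK}$ or $0_{\semK}$; since $\liftH{h}$ maps $1_{\semK}$ to $1_{\semK}$ and $0_{\semK}$ to $0_{\semK}$, it suffices to note that $k$ and $\liftH{h}(k)$ carry the same outermost version annotation. This is guaranteed by the construction of lifted homomorphisms, which satisfy $\liftH{h}(\upMarker{X}{\xid}{\version}{i}(k')) = \upMarker{X}{\xid}{\version}{i}(\liftH{h}(k'))$; hence $\versionOf(\liftH{h}(k)) = \versionOf(k)$ --- exactly the identity already used for $\isMax$ and $\isStrictMax$ in Lemma~\ref{lem:lifted-commutes-version-merge}. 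Plugging this in rewrites the sum as $\sum_i \liftH{h}(R(t))[i] \times \theta(\liftH{h}(R(t))[i])$, which is precisely $\vFilt{\theta}(\liftH{h}(R))(t)$, completing the argument.

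I do not anticipate any real difficulty here: the only point requiring care is the bookkeeping of summand indices under annihilation of summands by $\liftH{h}$, and that is handled identically to Lemma~\ref{lem:lifted-commutes-version-merge}. It is worth recording explicitly why the hypothesis that $R$ is a normalized admissible $\mvOf{\semK_1}$-relation is needed: it ensures that every summand of $R(t)$ has the form $\upMarker{X}{\xid}{\version}{i}(k')$, so that $\version$ --- and therefore $\theta$ --- is well defined on each summand, both in $R$ and in $\liftH{h}(R)$ (admissibility being preserved because lifted homomorphisms do not touch version annotations). Combined with Lemmas~\ref{lem:lift-h-commutes-RC-SI-histories} and~\ref{lem:lifted-commutes-version-merge}, this lemma is what lets us reduce the proof of Theorem~\ref{theo:update-reenactment-equivalence-RCSI} to establishing $\ppSRV$-equivalence, which then lifts automatically to $\mvK$-equivalence for every naturally ordered semiring.
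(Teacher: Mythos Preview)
Your proposal is correct and follows essentially the same approach as the paper's proof: unfold the definition of $\vFilt{\theta}$, push $\liftH{h}$ through the sum and products, and use that $\theta$ depends only on the outermost version annotation, which lifted homomorphisms preserve. Your write-up is in fact more careful than the paper's, since you explicitly justify the reindexing under summand annihilation and spell out why admissibility is needed for $\theta$ to be well defined on each summand.
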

\begin{proof}
Substituting the definition of $\vFilt{\theta}$ we get:

\begin{align*}
&\liftH{h}( \vFilt{\theta}(R))(t) \\
= &\liftH{h}(\sum_{i=0}^{\numInSum{R(t)}} R(t)[i] \times \theta(R(t)[i]))\\
= &\sum_{i=0}^{\numInSum{\liftH{h}(R(t))}} \liftH{h}(R(t)[i]) \times \liftH{h}(\theta(R(t)[i]))\\
\end{align*}

Recall that $\theta(k)$ is evaluated over the version $\version$ of the outermost version annotation of each summand $k_i$ in the normalized annotation $k$. Thus, we get 

\begin{align*}
= &\sum_{i=0}^{\numInSum{\liftH{h}(R)(t)}} \liftH{h}(R)(t)[i] \times \theta(\liftH{h}(R)(t)[i])\\
=  &\vFilt{\theta}(\liftH{h}(R)))(t)
\end{align*}
\end{proof}

Finally, the following lemma establishes that if a Transaction $\xid$ uses only updates, deletes, and inserts with singleton relations (operator $\asingleton{t}{k}$ corresponding to an SQL statement of the form \lstinline!INSERT INTO ... VALUES ...)! then $\relCV{R}{\finish{\xid} - 1}$ contains all immediate predecessors of all tuple versions created by $\xid$ 's updates and deletes. This is the first prerequisite for proving the correctness of $\ract_{opt}(\xid)$, because $\ract_{opt}$ avoids the use of the version merge operator by only using accesses to relation versions as of $\finish{\xid}-1$. In the following definition we make use of a predicate $\hasCreated(\xid,t,k)$ which determines whether a summand $k$ in the annotation of a tuple $t$ has been created by Transaction $\xid$. Formally,

$$
\hasCreated(\xid,t,k) \Leftrightarrow \exists k': k = \upMarker{X}{\xid}{\version}{\tid}(k')
$$

\begin{defi}
Let $\history$ be a RC-SI history and $\xid \in \history$. Consider a summand $k$ in the annotation $\relV{R}{\xid}{\finish{\xid}}(t)$ created by $\xid$ ($\hasCreated(\xid,t,k)$ is true). The immediate predecessor $\impred(\xid,t,k)$ is defined as the latest tuple version $k'$ with identifier $\idOf(k') = \idOf(k)$  created by a transaction $\xid' \neq \xid$ in the annotation of a tuple $t'$. If no such version exists (e.g., $\xid$ did insert $k$) then $\impred(\xid,t,k)$ is undefined.
\end{defi}

In other words, the immediate predecessor of a tuple version $k$ is the last version of this tuple created by another transaction before the creation of $k$.

\begin{lem}\label{lem:end-minus-one-has-all-pred}
Let $\xid$ be a transaction where each insert's query is of the form $\asingleton{t}{k}$. If $\xid$ is executed as part of a RC-SI history $\history$ then there exists a tuple $t'$ such that $\impred(\xid,t,k)$ is present in $\relCV{R}{\finish{\xid}-1}(t')$. 
\end{lem}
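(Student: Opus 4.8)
The plan is to exploit the write-lock guarantee of RC-SI: once transaction $\xid$ first updates or deletes a given tuple, it keeps a write lock on that tuple until it commits, so the committed tuple version it overwrote stays the latest committed version of that tuple right up to $\finish{\xid}$. I would begin by disposing of the trivial case. Fix a summand $k$ of $\relV{R}{\xid}{\finish{\xid}}(t)$ with $\hasCreated(\xid,t,k)$ and assume $\impred(\xid,t,k)$ is defined (otherwise there is nothing to prove). If $k$ were produced by an insert of $\xid$ then, since every insert of $\xid$ has the form $\asingleton{t}{k_0}$ with $k_0 \in \semK$, we would have $k = \upMarker{I}{\xid}{\version}{\tid_{new}}(k_0)$ for a fresh identifier $\tid_{new}$, and no version carrying this identifier is ever created by another transaction, so $\impred(\xid,t,k)$ would be undefined --- a contradiction. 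Hence $k$ is produced by an update or delete of $\xid$.

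Next I would locate the first modification of the relevant tuple. Unrolling the recursion defining $\relV{R}{\xid}{\version}$ (Figure~\ref{fig:si-hist-db-definitions-a}) together with the definitions of $\relUpdate{\theta}{A}{\xid}{\version}{R}$ and $\relDelete{\theta}{\xid}{\version}{R}$ in Definition~\ref{def:updates}, there is a unique earliest operation $u \in \xid$ --- an update or a delete --- whose output contains a summand $\upMarker{X}{\xid}{\version(u)+1}{\tid}(k^{\ast})$ with $\tid = \idOf(k)$, where $k^{\ast}$ is a summand of $\relVE{R}{\xid}{\version(u)}(t^{\ast})$ for some tuple $t^{\ast}$. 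Because $u$ is the first operation of $\xid$ touching the identifier $\tid$, the second sum in the definition of $\relVE{R}{\xid}{\version(u)}$ (Figure~\ref{fig:si-hist-db-definitions-b}), which retains only versions created by $\xid$, contains no summand with identifier $\tid$; hence $k^{\ast}$ must come from the first sum, so $k^{\ast}$ is a summand of $\relCV{R}{\version(u)}(t^{\ast})$. By the definition of $\relCV{R}{\version(u)}$ (Figure~\ref{fig:si-hist-db-definitions-c}) this forces $k^{\ast} = \cMarker{\xid'}{\version'}{\tid}(k'')$ for the committed transaction $\xid' \neq \xid$ that created it, with $\validAt(\xid',t^{\ast},k^{\ast},\version(u)) = 1$ and $\finish{\xid'} < \version(u)$. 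Peeling $\xid$'s annotations off $k$ then identifies $k^{\ast}$ as exactly $\impred(\xid,t,k)$.

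The core step is the locking argument. Since $u$ updates or deletes the tuple with identifier $\tid$ and $u$ did execute (it produced $k$), transaction $\xid$ holds a write lock on that tuple continuously over the interval $[\version(u),\finish{\xid})$; any other transaction attempting to modify this tuple waits and is restarted after $\xid$'s commit. Consequently no transaction $\xid'' \neq \xid$ can invalidate the summand $k^{\ast}$ at a time $\leq \finish{\xid}-1$: one committing by $\version(u)$ is already excluded by $\validAt(\xid',t^{\ast},k^{\ast},\version(u)) = 1$, one acting in $[\version(u),\finish{\xid})$ would need the write lock held by $\xid$, and one that modified $k^{\ast}$ before $\version(u)$ but committed later would have kept the lock past $\version(u)$, contradicting that $u$ ran. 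Thus $\hasUp(\xid'',t^{\ast},k^{\ast},\version)$ fails for every $\xid'' \neq \xid$ and every $\version \leq \finish{\xid}-1$, and since $\xid$ itself has $\finish{\xid} > \finish{\xid}-1$ it is excluded from the $\validAt$ condition at time $\finish{\xid}-1$; hence $\validAt(\xid',t^{\ast},k^{\ast},\finish{\xid}-1) = 1$. As also $\finish{\xid'} < \version(u) \leq \finish{\xid}-1$, the definition of $\relCV{R}{\finish{\xid}-1}$ includes $k^{\ast} = \impred(\xid,t,k)$ as a summand of $\relCV{R}{\finish{\xid}-1}(t^{\ast})$, which is the claim with $t' = t^{\ast}$.

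The step I expect to be the main obstacle is the locking argument, because the declarative history model of Section~\ref{sec:history} does not represent locks explicitly. I would handle it by appealing to the well-formedness property of RC-SI histories --- two transactions never hold concurrent write locks on the same tuple, and a write lock is retained until commit --- the same property that~\cite{AG16} relies on for the SI case; everything else is a routine unfolding of $\validAt$, $\hasUp$, and the definition of $\relCV{R}{\cdot}$.
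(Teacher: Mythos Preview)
Your proposal is correct and follows essentially the same approach as the paper: locate the first operation of $\xid$ that touches the tuple identifier, observe that the immediate predecessor sits in the committed snapshot at that time, and then use the RC-SI write-lock-held-until-commit property to argue that no other transaction can overwrite it before $\finish{\xid}-1$. The paper phrases the locking step as a short proof by contradiction, whereas you unfold $\relVE{R}{\xid}{\version(u)}$ and $\validAt$ explicitly and handle the insert case separately, but the underlying argument is the same.
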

\begin{proof}
For any tuple version $k$ created by Transaction $\xid$, there has to exist an operation $u_i$ in $\xid$ that first created a tuple version $k'$ with $\idOf(k) = \idOf(k')$. Naturally, $\relCV{R}{\version(u_i)(t')}$ for some tuple $t'$ has to contain $\impred(\xid,t,k)$ if it is defined. We proof the lemma by contradiction. Assume that $\relCV{R}{\finish{\xid} - 1}(t')$ does not contain $\impred(\xid,t,k)$. This can only be the case if there exists a Transaction $\xid''$ with $\finish{\xid''} < \finish{\xid}$ that did update or delete $k$. However, since $\up_i$ modified $k$ we know that $\xid$ would have to hold a write lock on the tuple version corresponding to $k$ after $\version(u_i)$ and under RC-SI write locks are held until transaction commit. Thus, no such Transaction $\xid''$ can exist. 
\end{proof}

\begin{theoremproof}{\bf \ref{theo:update-reenactment-equivalence-RCSI}}
  Let $\xid$ be a RC-SI transaction. 
  Then, 

$$\xid \equiv_{\ppSRV} \ract(\xid) \equiv_{\ppSRV} \ract_{opt}(\xid)$$
\end{theoremproof}

\begin{proof}
We first prove that $\xid \equiv_{\ppSRV} \ract(\xid)$ and then equivalence with $\ract_{opt}$.\\

\prooftitle{$\xid \equiv_{\ppSRV} \ract(\xid)$}

Assume that transaction $\xid = u_1, \ldots, u_n, c$ is updating a single relation $R$. As was shown in~\cite{AG16}, the extension to multiple relations is straightforward. To prove equivalence it suffices to show that a reenactment query for an update $\ract(\up)$ is equivalent to the update $\up$ and that each such reenactment query is executed over the same input relation as in the original history $\history$. The semantics for updates is the same under SI and RC-SI. The proof of $\up \equiv_{\ppSRV} \ract(\up)$ was already given in~\cite{AG16}. It remains to show that the input $\relV{R}{\xid}{\version(\up)}$ is the same as the input produced for $\ract(\up)$ by the reenactment query for Transaction $\xid$.

We prove this fact by induction over the number of updates in Transaction $\xid$. 

\prooftitle{Induction Start} Let $\xid = u_1, c$. This case is analog to SI and thus was already proven in~\cite{AG16}.

\prooftitle{Induction Step} Assume that $\relV{R}{\xid}{\version(\up_i)} = \relVE{R}{\xid}{\version(\up_i)}$ with $i \in \{ 1, \ldots, m\}$ where $m$ is the number of operations in the Transaction $\xid$ is correctly constructed by the reenactment query for $\xid$ for any transaction with  $m < n$ operations. We need to prove that for any transaction $\xid = u_1,\ldots,u_{n+1}, c$ we have that $\relV{R}{\xid}{\version(\up_{n+1})}$ is equal to the input for the reenactment query $\ract(\up_{n+1})$ of $\up_{n+1}$ within the reenactment query $\ract(\xid)$. In the reenactment query, the input to $\ract(\up_{n+1})$ is $\vMerge(\ract(u_{n}),\relCV{R}{\version(u_{n+1})})$.

\begin{align*}
  &\vMerge(\ract(u_{n}),\relCV{R}{\version(u_{n+1})})(t)\\
= & \sum_{i=0}^{\numInSum{\ract(u_{n})(t)}} \ract(u_{n})(t)[i] \times \isMax(\relCV{R}{\version(u_{n+1})},\ract(u_{n})(t)[i]) \\ 
&+\sum_{i=0}^{\numInSum{\relCV{R}{\version(u_{n+1})}(t)}} \relCV{R}{\version(u_{n+1})}(t)[i] \\
&\mathbigtab\mathbigtab\times \isStrictMax(\ract(u_{n}),\relCV{R}{\version(u_{n+1})}(t)[i])
\end{align*}

Based on the induction hypothesis we have 

$$\ract(u_{n}) = \relV{R}{\xid}{\version(\up_{n+1})}$$. 

Thus, denoting $\version(\up_{n+1})$ as $\version_{n+1}$:

\begin{align*}
= & \sum_{i=0}^{\numInSum{\relV{R}{\xid}{\version_{n+1}}(t)}} \relV{R}{\xid}{\version_{n+1}}(t)[i] \\
&\mathbigtab\mathbigtab\times \isMax(\relCV{R}{\version_{n+1}},\relV{R}{\xid}{\version_{n+1}}(t)[i]) \\ 
&+\sum_{i=0}^{\numInSum{\relCV{R}{\version_{n+1}}(t)}} \relCV{R}{\version_{n+1}}(t)[i] \\
&\mathbigtab\mathbigtab\times \isStrictMax(\relV{R}{\xid}{\version_{n+1}},\relCV{R}{\version_{n+1}}(t)[i])
\end{align*}

Note that $\relVE{R}{\xid}{\version_{n+1}}(t)$ is also defined as a sum over the elements from  $\relV{R}{\xid}{\version_{n+1}}(t)$ and $\relCV{R}{\version_{n+1}}(t)$. Individual summands are  filtered out using $\validIn$ and $\validEx$. Thus, to proof that $\vMerge(\ract(u_{n}),\relCV{R}{\version(u_{n+1})}) = \relV{R}{\xid}{\version(\up_{n+1})}$, we have to show that if either the $\isMax$ or $\isStrictMax$ function returns $1$ on a summand then the same is true for $\validIn$ respective $\validEx$ and vice versa. 

Fixing a tuple $t$, we have to distinguish between five cases for each tuple version (summand) $k$ in the annotation of tuple $t$ as shown below. Table~\ref{tab:cases-of-tuple-version-spread} shows the versions of a tuple version with an identifier $\tid$ in $\relV{R}{\xid}{\version_{n+1}}$ and $\relCV{R}{\version_{n+1}}$ for each of the cases.
 \begin{enumerate}
\item $k$ is the latest version of all tuple versions with identifier $\idOf(k)$ and was created by Transaction $\xid$ before $\version_{n+1}$. In this case $k$ is only present in $\relV{R}{\xid}{\version_{n+1}}(t)$. For this case we assume that the first tuple version with identifier $\idOf(k)$ was created by an insert of Transaction $\xid$. Thus, there cannot exist an outdated version $k'$ with this identifier in the annotation of any tuple $t'$ in $\relCV{R}{\version_{n+1}}$. 
\item $k$ is the latest version of all tuple versions with identifier $\idOf(k)$ and was created by a Transaction $\xid$ before $\version_{n+1}$. In this case $k$ is only present in $\relV{R}{\xid}{\version_{n+1}}(t)$. The previous tuple version with identifier $\idOf(k)$ was created by a Transaction $\xid' \neq \xid$. Hence, there has to exist an outdated version $k'$ with this identifier in the annotation of some tuple $t'$ in $\relCV{R}{\version_{n+1}}$. 
\item $k$ is the latest version of all tuple versions with identifier $\idOf(k)$ and was created by a Transaction $\xid'$ that committed after $\start{\xid}$, but before $\version_{n+1}$. In this case $k$ is only present in $\relCV{R}{\version_{n+1}}(t)$. 
For this case we assume that the previous tuple version with identifier $\idOf(k)$ was created by an insert of Transaction $\xid' \neq \xid$. Thus, there cannot exist an outdated version $k'$ with this identifier in the annotation of any tuple $t'$ in $\relV{R}{\xid}{\version_{n+1}}$. 
\item $k$ is the latest version of all tuple versions with identifier $\idOf(k)$ and was created by a Transaction $\xid'$ that committed after $\start{\xid}$, but before $\version_{n+1}$. In this case $k$ is only present in $\relCV{R}{\version_{n+1}}(t)$. 
The first tuple version with identifier $\idOf(k)$ was created by an insert of a Transaction $\xid'' \neq \xid$ where $\finish{\xid''} < \version_{n+1}$. Hence, there has to exist an outdated version $k'$ with this identifier in the annotation of some tuple $t'$ in $\relV{R}{\xid}{\version_{n+1}}$.
 \item $k$ is the latest version of all tuple versions with identifier $\idOf(k)$ and was created by a Transaction $\xid'$ that committed before $\start{\xid}$. In this case $k$ is present in both $\relV{R}{\xid}{\version_{n+1}}(t)$ and $\relCV{R}{\version_{n+1}}(t)$.
 \end{enumerate}

 \begin{table*}[t]
   \centering
   \textbf{Occurrence of summand with identifier $\tid$}\\[2mm]
   \begin{tabular}{|l|c|c|}
     \hline
     \thead{Case}&\thead{$\relCV{R}{\version_{n+1}}$}&\thead{$\relV{R}{\xid}{\version_{n+1}}$}\\
     1 & none present & latest version $k$ with $\idOf(k) = \tid$ \\ 
     2 & outdated version $k'$ with $\idOf(k) = \tid$ & latest version $k$ with $\idOf(k) = \tid$ \\ 
     3 &  latest version $k$ with $\idOf(k) = \tid$ & none present \\ 
     4 &  latest version $k$ with $\idOf(k) = \tid$ & outdated version $k'$ with $\idOf(k) = \tid$ \\ 
     5 &  latest version $k$ with $\idOf(k) = \tid$ &  latest version $k$ with $\idOf(k) = \tid$\\
     \hline
   \end{tabular}
   \caption{Cases of how tuple versions with a fixed identifier $\tid$ can occur in $\relCV{R}{\version_{n+1}}$ and $\relV{R}{\xid}{\version_{n+1}}$.}
   \label{tab:cases-of-tuple-version-spread}
 \end{table*}

\prooftitle{Case 1}
Since $k$ is the only summand with identifier $\tid$ in $\relV{R}{\xid}{\version_{n+1}}$ and does not occur in $\relCV{R}{\version_{n+1}}$, function $\isMax($ $\relCV{R}{\version_{n+1}},k)$  returns $1$ and $k$ is in $\vMerge(\ract(u_{n}),\relCV{R}{\version(u_{n+1})})(t)$. Similarly, since $k$ is the latest version, we have that 
$\validIn($ $\relV{R}{\xid}{\version_{n+1}},t,k, \version_{n+1})$ returns $1$ because $k$ has a version annotation from $\xid$ as the outmost version annotation. Thus, $k$ is also present in $\relVE{R}{\xid}{\version_{n+1}}$.

\prooftitle{Case 2}
Summand $k$ is the only summand with identifier $\tid$ in $\relV{R}{\xid}{\version_{n+1}}$. While there exists a summand $k'$ with identifier $\tid$ in the annotation of some tuple $t'$ in $\relCV{R}{\version_{n+1}}$, we know that $\version(k') < \version(k)$. Thus, function $\isMax(\relCV{R}{\version_{n+1}},k)$  returns $1$ and $k$ is in $\vMerge(\ract(u_{n}),\relCV{R}{\version(u_{n+1})})(t)$. Function $\validIn(\relV{R}{\xid}{\version_{n+1}})$ returns $1$ for the same reason as in case 1 above.

Now consider summand $k'$ with $\idOf(k') = \tid$ that occurs as a summand in the annotation of tuple $t'$ in $\relCV{R}{\version_{n+1}}$. We have to show that both $\isStrictMax$ and $\validEx$ return $0$ for this outdated tuple version. $\isStrictMax(\relV{R}{\xid}{\version_{n+1}}, k') = 0$, because the summand $k$ occurs in $\relV{R}{\xid}{\version_{n+1}}(t)$,  $\idOf(k) = \idOf(k')$, and $\versionOf(k) > \versionOf(k')$. Also $\validEx($ $\relCV{R}{\version_{n+1}},t',k', \version_{n+1})$ returns $0$, because $\hasUp(\xid, t',k',\version_{n+1})$ evaluates to true.

\prooftitle{Case 3}
Since $k$ is the only summand with identifier $\tid$ in $\relCV{R}{\version_{n+1}}$ and does not occur in $\relV{R}{\xid}{\version_{n+1}}$, function $\isStrictMax$ $(\relV{R}{\xid}{\version_{n+1}},k)$  returns $1$ and $k$ is in $\vMerge(\ract(u_{n}),\relCV{R}{\version(u_{n+1})})(t)$. Similarly, since $k$ is the latest version of a tuple version with identifier $\idOf(k)$, we have that 
$\validEx($ $\relCV{R}{\version_{n+1}},t,k, \version_{n+1})$ returns $1$ because $\hasUp(\xid, t,k,\version_{n+1})$ evaluates to false. Thus, $k$ is also present in $\relVE{R}{\xid}{\version_{n+1}}$.

\prooftitle{Case 4}
Summand $k$ is the only summand with identifier $\tid$ in $\relCV{R}{\version_{n+1}}$. While there exists a summand $k'$ with identifier $\tid$ in the annotation of some tuple $t'$ in $\relV{R}{\xid}{\version_{n+1}}$, we know that $\version(k') < \version(k)$. Thus, function $\isStrictMax(\relV{R}{\xid}{\version_{n+1}},k)$  returns $1$ and $k$ is in $\vMerge(\ract(u_{n}),\relCV{R}{\version(u_{n+1})})(t)$. Function $\validEx(\relCV{R}{\version_{n+1}})$ returns $1$ for the same reason as in case 3 above.

Now consider summand $k'$ with $\idOf(k') = \tid$ that occurs as a summand in the annotation of tuple $t'$ in $\relV{R}{\xid}{\version_{n+1}}$. We have to show that both $\isMax$ and $\validIn$ return $0$ for this outdated tuple version. $\isMax(\relCV{R}{\version_{n+1}}, k') = 0$, because there is summand $k$ in $\relCV{R}{\version_{n+1}}(t)$, $\idOf(k) = \idOf(k')$, and $\versionOf(k) > \versionOf(k')$. Also $\validIn(\relV{R}{\xid}{\version_{n+1}},$ $t',k', \version_{n+1})$ returns $0$, because $k$ does not have a version annotation from $\xid$ as its outermost version annotation.

\prooftitle{Case 5}
Summand $k$ was created by a Transaction $\xid'$ with $\finish{\xid'} < \start{\xid}$. Thus, $k$ is present in both $\relV{R}{\xid}{\version_{n+1}}(t)$ and  $\relCV{R}{\version_{n+1}}(t)$ and based on the definition of this case no other summand $k'$ with $\idOf(k) = \idOf(k')$ occurs in the annotation of any tuple $t'$ in $\relV{R}{\xid}{\version_{n+1}}(t)$ or  $\relCV{R}{\version_{n+1}}(t)$. Thus, $\isMax(\relCV{R}{\version_{n+1}},k)$ returns 1 because there is no newer version of $k$ in $\relCV{R}{\version_{n+1}}$ while $\isStrictMax(\relV{R}{\xid}{\version_{n+1}}, k)$ returns  $0$, because there exists $k$ in $\relV{R}{\xid}{\version_{n+1}}$. Similarly, $\validIn$ returns $0$ because $\xid$ has not created tuple version $k$ whereas $\validEx$ evaluates to $1$, because $\xid$ has not updated $k$. 

Having proven all cases we get:

\begin{align*}
& \sum_{i=0}^{\numInSum{\relV{R}{\xid}{\version_{n+1}}(t)}} \relV{R}{\xid}{\version_{n+1}}(t)[i] \\
&\mathbigtab\mathbigtab\times \isMax(\relCV{R}{\version_{n+1}},\relV{R}{\xid}{\version_{n+1}}(t)[i]) \\ 
&+\sum_{i=0}^{\numInSum{\relCV{R}{\version_{n+1}}(t)}} \relCV{R}{\version_{n+1}}(t)[i] \\
&\mathbigtab\mathbigtab\times \isStrictMax(\relV{R}{\xid}{\version_{n+1}},\relCV{R}{\version_{n+1}}(t)[i])\\
  =&\sum_{i=0}^{\numInSum{\relV{\rel}{\xid}{\version_{n+1}}(t)}} \hspace{-4mm} \nthOfK{\relV{\rel}{\xid}{\version_{n+1}}(t)}{i}\\
&\mathbigtab\mathbigtab \times \validIn (\xid,t, \nthOfK{\relV{\rel}{\xid}{\version_{n+1}}(t)}{i}, \version_{n+1})\\
&+ \hspace{-4mm}
\sum_{i=0}^{\numInSum{\relCV{\rel}{\version_{n+1}}(t)}} \hspace{-2mm} \nthOfK{\relCV{\rel}{\version_{n+1}}(t)}{i} \\
&\mathbigtab\mathbigtab\times \validEx (\xid,t, \nthOfK{\relCV{\rel}{\version_{n+1}}(t)}{i}, \version_{n+1})\\[2mm]
\end{align*}
Reordering the two sums we get
\begin{align*}
=&\sum_{i=0}^{\numInSum{\relCV{\rel}{\version_{n+1}}(t)}} \hspace{-2mm} \nthOfK{\relCV{\rel}{\version_{n+1}}(t)}{i} \\
&\mathbigtab\mathbigtab\times \validEx (\xid,t, \nthOfK{\relCV{\rel}{\version_{n+1}}(t)}{i}, \version_{n+1})\\
&+ \hspace{-4mm}
\sum_{i=0}^{\numInSum{\relV{\rel}{\xid}{\version_{n+1}}(t)}} \hspace{-4mm} \nthOfK{\relV{\rel}{\xid}{\version_{n+1}}(t)}{i} \\
&\mathbigtab\mathbigtab\times \validIn (\xid,t, \nthOfK{\relV{\rel}{\xid}{\version_{n+1}}(t)}{i}, \version_{n+1})\\
= &\relVE{R}{\xid}{\version_{n+1}}(t)\\
=&\relV{R}{\xid}{\version_{n+1}}(t) 
\end{align*}
Thus, we have shown that $\xid \equiv_{\ppSRV} \ract(\xid)$.
\\[3mm]

\prooftitle{$\xid \equiv_{\ppSRV} \ract_{opt}(\xid)$}

Let $\xid = u_1, \ldots, u_n, c$ be a transaction in a RC-SI history $\history$. Recall that $\ract_{opt}$ is evaluated over $\relCV{\rel}{\finish{\xid} - 1}$. As shown in Lemma~\ref{lem:end-minus-one-has-all-pred} for any given tuple identifier $\tid$, $\relCV{\rel}{\finish{\xid} - 1}$ contains the predecessor of the earliest version of a tuple with identifier $\tid$ created by Transaction $\xid$ (if such a tuple version exists). Thus, the reenactment is correct as long as the following three conditions hold: 1) the first update in $\xid$ that creates a new version of with identifier $\tid$ updates this version in $\ract_{opt}$; 2) the reenactment query for each update $u_i \in \xid$ is not applied to any tuple version $k$ from $\relCV{\rel}{\finish{\xid} - 1}$ with $\version(k) > \version(u_i)$; and 3) each tuple version $k$ is passed on by the reenactment query for each $u_j$ with  $\version(k) > \version(u_j)$.

We prove this by induction over the position of an update in $\xid= u_1, \ldots, u_n, c$.

\prooftitle{Induction Start}
Consider $u_1$, the first update of $\xid$. Update $u_1$ is either an update, delete, or simple insert (the insert's query is a singleton operator $\asingleton{t}{k}$). Let $\version_1$ denote $\version(u_1)$ and $\version_e$ to denote $\finish{\xid} - 1$.

\smallskip\noindent\textit{$u_1$ is an update:} First consider the case where $u_1$ is an update. The part of the reenactment query for $\xid$ corresponding to $u_1$ is
\begin{align*}
&\annotOp{U}{\xid}{\version_1+1}{}(\projection_{A}(\selection_\theta(\vFilt{V \leq \version_1}(\relCV{R}{\version_e}))))\\ 
                                                  \union &\selection_{\neg \theta}(\vFilt{V \leq \version_1}(\relCV{R}{\version_e})) \\
\union &\vFilt{V > \version_1}(\relCV{R}{\version_e}))
\end{align*}

Based on Lemma~\ref{lem:end-minus-one-has-all-pred}, $\relCV{\rel}{\version_e}$ contains all versions of tuples that got updated by $u_1$. 
Consider a tuple version $k$ in the annotation of a tuple $t$ in $\relCV{\rel}{\version_e}$. Depending on whether $\version(k) \leq \version_1$ holds or not, this tuple version will be visible to $u_1$ or not. If $k$ is visible to $u_1$ then whether $k$ will be updated  depends on whether $t$ fulfills the update's condition or not. If $\version(k) > \version_1$ then $k$ will not fulfill the condition $V \leq \version_1$ of the version filter operators in the first two branches of the union. Tuple version $k$ fulfills the condition of the third branch ($V > \version_1$) and, thus, will be passed on unmodified to the output of the part of the reenactment query corresponding to $u_1$. This implies that the second and third correctness conditions introduced above hold (non-visible tuple versions are not updated and passed on unmodified). If $k$ was visible to $u_1$ and was updated by $u_1$, then we know that $\version(k) \leq \version_1$. Thus, $k$ fulfills the condition $V \leq \version_1$ of the version filter operator in the first two branches of the union, but only fulfills the selection condition ($\theta$) of the first branch of the union and, thus, is updated (first condition). Note that if $k$ was visible to $u_1$, but was not updated by $u_1$ then either $k$ will be ``routed'' through the second branch of the union (if $k$ is the latest version of a tuple with identifier $\idOf(k)$ present in $\relCV{\rel}{\version_e}$) or $k$ will not be in $\relCV{\rel}{\version_e}$ (if $\relCV{\rel}{\version_e}$ contains a newer version of a tuple with identifier $\idOf(k)$).

\smallskip\noindent\textit{$u_1$ is a delete:} The part of the reenactment query for $\xid$ corresponding to a delete $u_1$ is

\begin{align*}
&\annotOp{D}{\xid}{\version_1+1}{}(\selection_\theta(\vFilt{V \leq \version_1}(\relCV{\rel}{\version_e})))\\ 
\union &\selection_{\neg \theta}(\vFilt{V \leq \version_1}(\relCV{\rel}{\version_e}))\\
\union &\vFilt{V > \version_1}(\relCV{\rel}{\version_e})
\end{align*}

Consider a tuple version $k$ in the annotation of a tuple $t$ in $\relCV{\rel}{\version_e}$. Note that the third branch of the union is identical for updates and deletes. Hence, if $\version(k) \leq \version_1$, the second and third conditions hold. The cases where $k$ is affected by $u_1$ or $k$ is visible, but not affected, are also analog to the proof for updates. 

\smallskip\noindent\textit{$u_1$ is a simple insert:} The part of the reenactment query for $\xid$ corresponding to a delete $u_1$ is

\begin{align*}
\annotOp{I}{\xid}{\version_1+1}{}(\asingleton{t}{k}) \union \relCV{\rel}{\version_e}
\end{align*}

All tuples from $\relCV{\rel}{\version_e}$ are present in the output (second and third condition) and, since an insert creates new tuple versions, the first condition trivially holds.

\prooftitle{Induction Step} 

We have to show that under the assumption that updates $u_j$ with $j \leq i$ are reenacted correctly by $\ract_{opt}$, then the same holds for $u_{i+1}$. Let $\version_{i+1}$ denote $\version(u_{i+1})$. Again this has to be shown for the three cases of $u_{i+1}$ being an 1) update, 2) delete, or 3) simple insert. Observe that the input to the part of the reenactment query corresponding to $u_j$ is equal to $\relCV{\xid}{\version_e}$ except that some tuple versions have been replaced by updated tuple versions by the part of the reenactment query corresponding to updates $u_1$ to $u_i$. Since this is the only difference to the induction start, we only have to prove this additional case. Consider such a version $k$ of tuple $t$ produced by $u_j$ with $j \leq i$. It follows that $\version(k) \leq \version_{i+1}$. Thus, $k$ fulfills the conditions of the first two branches of the union for updates and deletes. Based on the induction hypothesis, if $u_{i+1}$ produces a tuple with identifier $\idOf(k)$ then $k$ is the previous version of this tuple. Thus, the update's respective deletion's condition $\theta$ evaluates to true for $t$ and $k$ will be updated respective deleted. If $k$ does not fulfill the condition then the second branch of the union passes on $k$ unmodified. It follows that $u_{i+1}$ is correctly reenacted by $\ract_{opt}(\xid)$.

\end{proof}

Note that based on the results of~\cite{AG16} equivalence under $\ppSRV$ implies equivalence under any naturally ordered MV-semiring $\mvK$. Furthermore, it was proven~\cite{AG16} that if  $\semK$ is naturally ordered, then so is $\mvK$.  The first result follows from commutation of queries and transactional histories with lifted homomorphisms. Based on Lemmas~\ref{lem:lift-h-commutes-RC-SI-histories},~\ref{lem:lifted-commutes-version-merge} and ~\ref{lem:lifted-commutes-version-filter} such homomorphisms also commute with the new query operators we have introduced and RC-SI histories. Thus, $\ppSRV$ implies equivalence under any naturally ordered MV-semiring $\mvK$ for any of the operations used in this paper. 

\end{document}